\theoremstyle{plain}
\newtheorem{lemma}{Lemma}[section]
\newtheorem{proposition}{Proposition}[section]
\newtheorem{theorem}{Theorem}[section]
\theoremstyle{definition}
\newtheorem{example}{Example}
\title{Analyzing order flows in limit order books \\ with ratios of Cox-type intensities}
\author[1]{Ioane Muni Toke\thanks{Laboratoire MICS et Chaire de Finance Quantitative, CentraleSupélec, Bâtiment Bouygues, 3 rue Joliot Curie, 91190 Gif-sur-Yvette, France. ioane.muni-toke@centralesupelec.fr}}
\author[2]{Nakahiro Yoshida\thanks{Graduate School of Mathematical Sciences, University of Tokyo: 3-8-1 Komaba, Meguro-ku, Tokyo 153- 8914, Japan. nakahiro@ms.u-tokyo.ac.jp}}
\affil[1]{Mathématiques et Informatique pour la Complexité et les Systèmes, CentraleSupélec, Université Paris-Saclay, France}
\affil[2]{Graduate School of Mathematical Sciences, University of Tokyo, Japan.}
\affil[1,2]{CREST, Japan Science and Technology Agency, Japan.}
\begin{document}

\maketitle

\begin{abstract}
We introduce a Cox-type model for relative intensities of orders flows in a limit order book. The model assumes that all intensities share a common baseline intensity, which may for example represent the global market activity. Parameters can be estimated by quasi-likelihood maximization, without any interference from the baseline intensity. Consistency and asymptotic behavior of the estimators are given in several frameworks, and model selection is discussed with information criteria and penalization. The model is well-suited for high-frequency financial data: fitted models using easily interpretable covariates show an excellent agreement with empirical data. Extensive investigation on tick data consequently helps identifying trading signals and important factors determining the limit order book dynamics. We also illustrate the potential use of the framework for out-of-sample predictions.
\end{abstract}

\emph{Keywords : order book models; point processes; Cox processes; Hawkes processes; ratio models; trading signals; imbalance; spread.}

\section{Introduction}

The limit order book is the central structure that aggregates all orders submitted by market participants to buy or sell a given asset on a financial market. Buy offers form the bid side, sell offers form the ask side. Simplified representations of possible interactions with a limit order book usually consider three archetypal sorts of orders : limit orders, market orders, and cancellations. Buy (resp. sell) limit orders are orders to buy (resp. sell) a given quantity of the asset with a given limit price strictly lower (resp. greater) than the current best ask (resp. bid) quote. Limit orders are stored in the book until matched or canceled. Buy (resp. sell) market orders are submitted without any limit price, and are thus matched against the current best ask (resp. bid) orders. Cancellations remove a non-executed limit order from the book.

The rise of electronic markets, the accelerating rate of trading on financial markets, the development of optimal trading strategies by brokers, etc., have pushed for better investigation and modeling of limit order books. \cite{Chakraborti2011}, \cite{Gould2013} and \cite{Abergel2016} provide some overview on recent modeling efforts. \cite{Cont2010} has been a seminal model using Poisson processes. \cite{Huang2015} investigate Markovian dynamics dependent on the size of the queues of the limit order book. \cite{MuniTokeYoshida2017} propose a state-dependent model that makes orders intensities depend on financial signals such as the bid-ask spread.
A common difficulty in estimating these models is the necessity to cope with the fact that market activity is highly fluctuating during the day, at all timescales. Daily market activity (for example measured in number or volume of trades, or in number or volume of all orders) is known to globally exhibit a U-shaped pattern, with a lower activity in the middle of the day, and a much higher activity in the morning after the market opening, and in the afternoon before market close. This seasonality effect is non-smooth: exogenous news (announcements of company results, of acquisitions, of macroeconomic indicators, etc.) occur all day long, at random or predetermined times, and may incur activity bursts. In Europe, openings of American markets are usually followed by some activity increase. 
It may be quite difficult to incorporate such variations in a model. In order to avoid such problems, one may try to remove the most hectic parts of the samples, and/or focus on a limited time interval, and/or split trading days in several parts. Examples can be found in, e.g., the growing literature on Hawkes processes in finance: \cite{Bacry2012} limits tests on their estimation procedure to two-hour periods to avoid seasonality effects; \cite{Lallouache2016} uses time-dependent piecewise-linear baseline intensity (with nodes spaced every few hours).
However, even at these scales, global market activity varies and may limit the reliability of estimation procedures.

In this work we continue previous investigations on the influences of financial variables (state of the order book or other trading signals) on the processes of order submissions. We assume that orders submissions are modeled by point processes with Cox-like intensities depending on given covariates. However, we do not directly try to model and fully estimate each and every intensities of the model, as in e.g. \cite{MuniTokeYoshida2017}, but we rather try to estimate the relative influences of given covariates on the intensities. To this end we assume that their exists a possibly random baseline intensity that is common to all processes under investigation, and that could for example incorporate the seasonality effects and changing activities that we have described. By dealing with ratios of intensities, we can then remove this baseline intensity from our estimation procedure, and thereby obtain a model that focuses only on the financial covariates under investigation.

In Section \ref{section:ModelDescription} we describe the general model of ratios of intensities. In Section \ref{section:LikelihoodAnalysis} we show that the quasi-likelihood estimators of the model are consistent and asymptotically normal. Section \ref{section:Penalization} discusses the method with respect to information criteria and penalization. Finally, Section \ref{section:EmpiricalResults} illustrates the benefits of the model with several examples of limit order book analysis. The ratio model is able to reproduce empirical observations on trading signals in orders flows such as imbalance, spread, quantities available, etc. Estimation results on more than 30 stocks in the Paris Stock Exchange for most of the year 2015 show a very good agreement between empirical observations and the proposed ratio model.

\section{Model description}
\label{section:ModelDescription}
Let $\mathbb I=\{0,\ldots,\bar i\}$ and $\mathbb J=\{1,\ldots,\bar j\}$ for some strictly positive integers $\bar i$ and $\bar j$. Let $(N^i_t)_{t\geq 0}, i\in\mathbb I,$  be some counting processes. We assume that the intensities $\lambda^i(t), i\in\mathbb I,$ of these counting processes share a common baseline intensity $\lambda_0(t)$. $\lambda_0(t)$ is neither observable nor specified as a function of observables and parameters. For any $i\in\mathbb I$, the intensity $\lambda^i(t)$ is written :
\begin{equation}
	\lambda^i(t,\vartheta) = \lambda_0(t) \exp\left( \sum_{j\in\mathbb J} \vartheta^i_j X_ j(t)\right),
	\label{eq:CoxModel}
\end{equation}
where $(X_j(t))_{t\geq 0}$ is the $j$-th observable covariate process and $\vartheta=(\vartheta^i_j)_{i\in\mathbb I, j\in\mathbb J}$ a parameter vector.

We are not interested in the value of the coefficient $\vartheta^i_j$, modeling the specific response of the counting process $N^i$ to the covariate $X^j$, but rather in the relative responses of the intensities compared to each other. 
Let
\begin{equation}
	\theta^i_j = \vartheta^i_j - \vartheta^0_j, \quad (i\in\mathbb I, j\in\mathbb J)
	\label{eq:DefinitionTheta}
\end{equation}
be the relative response of process $i$ compared to process $0$ with respect to covariate $j$, $j\in\mathbb J$.
Obviously, $\forall j\in\mathbb J, \theta^0_j=0$ since the process $0$ is taken as an arbitrary reference, and $\forall j\in\mathbb J, \forall (i,i')\in\mathbb I^2, \theta^{i'}_j-\theta^i_j = \vartheta^{i'}_j-\vartheta^i_j$, i.e. the differences between absolute and relative responses are equal.
Instead of the standard intensities defined at Equation \eqref{eq:CoxModel}, and since we are only interested in the relative responses, we will consider the intensities ratios
\begin{equation}
	r^i(t,\theta) = \frac{ \lambda^i(t,\vartheta)}{ \sum_{i'\in\mathbb I} \lambda^{i'}(t,\vartheta)}
	\label{eq:RatioModel}
\end{equation}
where $\theta=(\theta^1_1,\ldots,\theta^1_{\bar j}, \ldots, \theta^{\bar i}_1,\ldots,\theta^{\bar i}_{\bar j})$ denotes the new parameter vector.
Notation is justified by the following computation:
\begin{align}
	r^i(t,\theta) = & \frac{ \exp\left( \sum_{j\in\mathbb J} \vartheta^i_j X_ j(t)\right)}{ \sum_{i'\in\mathbb I} \exp\left( \sum_{j\in\mathbb J} \vartheta^{i'}_j X_ j(t)\right)}
	\nonumber \\
	= & \left[ \exp\left( - \sum_{j\in\mathbb J} \vartheta^i_j X_ j(t)\right)  \sum_{i'\in\mathbb I} \exp\left( \sum_{j\in\mathbb J} \vartheta^{i'}_j X_ j(t)\right)\right]^{-1}
	\nonumber \\
	= & \left[ \sum_{i'\in\mathbb I} \exp\left( \sum_{j\in\mathbb J} ( \vartheta^{i'}_j - \vartheta^i_j) X_ j(t)\right)\right]^{-1}
	\nonumber \\
	= & \left[ \sum_{i'\in\mathbb I} \exp\left( \sum_{j\in\mathbb J} ( \theta^{i'}_j - \theta^i_j) X_ j(t)\right)\right]^{-1}
	\nonumber \\
	= & \left[ 1 + \sum_{i'\in\mathbb I\setminus\{i\}} \exp\left( \sum_{j\in\mathbb J} ( \theta^{i'}_j - \theta^i_j) X_ j(t)\right)\right]^{-1}.
\label{equation:RatioWithRespectToTheta}
\end{align}
As hinted in the introduction, this framework is convenient to model a limit order book since daily movements are known to exhibit both intraday seasonality and sudden bursts of activities, sometimes in response to exogenous news, sometimes occurring for reasons less obvious to the outside observer. If we assume that such global market activity equally affects all types of orders, then such variations of market activity are taken into account by the non specified, possibly stochastic, baseline intensity $\lambda_0$. Therefore, we can estimate relative responses $\theta$ of order flows to specific covariates without having to account for the common intensity $\lambda_0$ due to global market activity. If different patterns of background intensities are observed in the data, then it is possible to incorporate these differences into covariates.

One should also note that by construction $\sum_{i\in\mathbb I} r^i(t,\theta)=1$. This gives another important feature of the model, which is that the intensities ratios $r^i$ are directly interpretable in terms of probability. Let us assume that the counting process $N^i$, $i\in\mathbb I$, are counting the number of events of (mutually exclusive) type $i$ occurring in the limit order book. Then the intensities ratio $r^i$ is the instantaneous probability that the next occurring event will be of type $i$. Our ratio model can thus estimate relative event probabilities independently of the variations of market activity that are assumed to be shared by the intensities of all processes. Several examples are provided in Section \ref{section:EmpiricalResults}.

\section{Likelihood analysis}
\label{section:LikelihoodAnalysis}
In this section, we show that the quasi-maximum likelihood estimator and the quasi-Bayesian estimator of the ratio model of Equation \eqref{eq:RatioModel} are consistent and asymptotically normal. Two close formulations are provided. In the first one, stationarity of covariates is assumed to obtain consistency and asymptotic normality. In the second one, this assumption is discarded, but it is assumed that the sample consists of repeated i.i.d. measurements to again obtain consistency and asymptotic normality. This second formulation is in agreement with the common practice in empirical finance to glue several trading days, or parts of trading days, into one single sample.

Let us introduce some notations used in the following analysis. For a tensor ${\sf T}=({\sf T}_{i_1,...,i_k})_{i_1,...,i_k}$, we write 
\begin{equation}
{\sf T}[u_1,...,u_k]
=
{\sf T}[u_1\otimes\cdots\otimes u_k]
=
\sum_{i_1,...,i_k}{\sf T}_{i_1,...,i_k}
u_1^{i_1}\cdots u_k^{i_k}
\end{equation}
for $u_1=(u_1^{i_1})_{i_1}$,..., $u_k=(u_k^{i_k})_{i_k}$. Brackets $[\ ,..., \ ]$ stand for a multilinear mapping. 
We denote by $u^{\otimes r}=u\otimes\cdots\otimes u$ the $r$ times tensor product of $u$. 
Let $\iota(\theta)=({\mathbf 0}_{\overline{j}},\theta)$, where ${\mathbf 0}_k=0\in\mathbb R^k$. We will write $\lambda^i(t,\theta)$ for $\lambda^i(t,\iota(\theta))$ for notational simplicity. 
Furthermore we write $\theta^i=(\theta^i_j)_{j\in\mathbb J}$ for $i\in\mathbb I$. Since $\theta^0=0$, we will use the notation $\mathbb I_0=\mathbb I\setminus\{0\}=\{1,\ldots,\bar i\}$ and consider a bounded domain $\Theta\subset \mathbb R^\mathsf{p}$, with $\mathsf{p}=\bar i \times \bar j$ as the parameter space of $\theta=(\theta^i_j)_{i\in\mathbb I_0,j\in\mathbb J}$. 

\subsection{Case of stationary covariates}

Let $T\in\mathbb R_+^*$. To estimate $\theta\in\Theta$ based on the observations on $[0,T]$, we consider the quasi-log likelihood (log partial likelihood) 
\begin{align}
\mathbb H_T(\theta) & =
\sum_{i\in\mathbb I}\int_0^T \log r^i(t,\theta)\,dN^i_t.
\label{eq:QuasiLogLik}
\end{align}
Obviously, the model is continuously extended to the closure $\overline{\Theta}$, and $\mathbb H_T$ is extended to there as a continuous function.
A quasi-maximum likelihood estimator (QMLE) is a measurable mapping $\hat{\theta}^M_T:\Omega\to\overline{\Theta}$ satisfying 
\begin{equation}
	\mathbb H_T(\hat{\theta}^M_T) = \max_{\theta\in\overline{\Theta}}\mathbb H_T(\theta)
	\label{eq:DefinitionQMLE}
\end{equation}
for all $\omega\in\Omega$. 
The quasi-Bayesian estimator (QBE) with respect to a prior density $\varpi$ is defined by 
\begin{equation}
	\hat{\theta}^B_T = \bigg[\int_\Theta\exp\big(\mathbb H_T(\theta)\big)\varpi(\theta)d\theta\bigg]^{-1}\int_\Theta\theta\exp\big(\mathbb H_T(\theta)\big)\varpi(\theta)d\theta. 
		\label{eq:DefinitionQBE}
\end{equation}
The QBE takes values in $\mathcal C[\Theta]$, the convex hull of $\Theta$. 
We assume $\varpi$ is continuous and satisfies $0<\inf_{\theta\in\Theta}\varpi(\theta)\leq\sup_{\theta\in\Theta}\varpi(\theta)<\infty$. 
These estimators are called together quasi-likelihood estimators. We now investigate asymptotic properties of the estimators when $T\to\infty$ in two cases.

Let $\mathbb X=(X_j)_{j\in\mathbb J}$. 
In this first step, in order to simplify the statements, we assume that 
the process $(\lambda_0,\mathbb X)$ is stationary. 
Denote by $\theta^*=((\vartheta^*)^i_j-(\vartheta^*)^0_j)_{i\in\mathbb I_0,j\in\mathbb J}$ the true value of $\theta$, 
and assume that $\theta^*\in\Theta$. 
Denote by $\mathcal B_I$ the $\sigma$-field generated by 
$\{\lambda_0(t),\mathbb X(t);\>t\in I\}$ for $I\subset\mathbb R_+$. 
Let 
\begin{equation} 
\alpha(h) = \sup_{t\in\mathbb R_+}\sup_{A\in\mathcal B_{[0,t]},B\in\mathcal B_{[t+h,\infty)}}
\big|P[A\cap B]-P[A]P[B]\big|
\end{equation}
for $h>0$. 
We consider the following conditions. 
\begin{description}
\item{\textbf{[A1]}}
The process $(\lambda_0,\mathbb X)$ is stationary, 
$\lambda_0(0)\in L_{\infty-}=\cap_{p>1}L_p$ and 
$\exp(|X_j(0)|)\in L_{\infty-}$ for all $j\in\mathbb J$. 
\item{\textbf{[A2]}} The function $\alpha$ is rapidly decreasing, that is, 
$\limsup_{h\to\infty}h^L\alpha(h)<\infty$ for every $L>0$. 
\end{description}
Let  
\begin{equation} 
\rho^i(x,\theta) = 
\left[\sum_{i'\in\mathbb I}\exp\bigg(x\big[\theta^{i'}-\theta^i\big]\bigg)\right]^{-1}
\end{equation}
for $x\in\mathbb R^{\overline{j}}$ and 
$\theta=(\theta^i)_{i\in\mathbb I_0}\in\mathbb R^\mathsf p$. 
Let 
\begin{equation} 
\Lambda(w,x) =
w\sum_{i\in\mathbb I}\exp\big(x\big[\vartheta^{*i}\big]\big)
\end{equation}
for $w\in\mathbb R_+$ and $x\in\mathbb R^{\overline{j}}$. 
Denote by ${\sf V}(x,\theta)$ 
the variance matrix of the $(1+\overline{i})$-dimensional multinomial distribution 
$\text{M}(1;\pi_0,\pi_1,...,\pi_{\overline{i}})$ with 
$\pi_i=\rho^i(x,\theta)$, $i\in\mathbb I$, 
and ${\sf V}(x,\theta)_{\alpha,\alpha'}$ is ${\sf V}(x,\theta)$'s 
$(\alpha+1,\alpha'+1)$-element. 
Let ${\sf V}_0(x,\theta)=({\sf V}(x,\theta)_{\alpha,\alpha'})_{\alpha,\alpha'\in\mathbb I_0}$.  
Define a symmetric tensor $\Gamma$ by 
\begin{equation} 
\Gamma[u^{\otimes2}]
=
E\bigg[\bigg({\sf V}_0(\mathbb X(0))\otimes \mathbb X(0)^{\otimes2}\bigg)[u^{\otimes2}]
\Lambda(\lambda_0(0),\mathbb X(0))\bigg]
\label{equation:GammaThm1}
\end{equation}
for $u\in\mathbb R^\mathsf p$, 
where 
${\sf V}_0(x)={\sf V}_0(x,\theta^*)$. 
The matrix $\Gamma$ is nonnegative definite. 
We assume 
\begin{description}
	\item{\textbf{[A3]}} $\det \Gamma>0$. 
\end{description}
The non-degeneracy of $\Gamma$ is a local condition. 
However, the global identifiability condition follows from this condition 
in the present model, as seen later. 

Let $\theta^*\in\Theta$ denote the true value of $\theta$. 
Denote by $C_p(\mathbb R^\mathsf p)$ the space of continuous functions on $\mathbb R^\mathsf p$ of at most polynomial growth. 
We write 
$\hat{u}^M_T=\sqrt{T}\big(\hat{\theta}^M_T-\theta^*\big)$ and 
$\hat{u}^B_T=\sqrt{T}\big(\hat{\theta}^B_T-\theta^*\big)$. 
Denote by $\zeta$ a $\mathsf p$-dimensional standard Gaussian vector. 
The quasi-likelihood analysis ensures convergence of moments 
as well as asymptotic normality of the quasi-likelihood estimators. 

\begin{theorem}
Suppose that $[A1]$, $[A2]$ and $[A3]$ are satisfied. Then 
\begin{equation}\label{170904-1}
E\big[f(\hat{u}^{\sf A}_T)\big] \to E\big[f(\Gamma^{-1/2}\zeta)\big]\quad 
\end{equation}
for any $f\in C_p(\mathbb R^\mathsf p)$ and ${\sf A}\in \{M,B\}$.
\label{170925-2}
\label{thm:Thm1}
\end{theorem}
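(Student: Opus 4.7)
The plan is to place the problem within the quasi-likelihood analysis (QLA) framework. Define the normalized likelihood-ratio random field
$$\mathbb Z_T(u)=\exp\bigl(\mathbb H_T(\theta^*+T^{-1/2}u)-\mathbb H_T(\theta^*)\bigr)\qquad u\in U_T=\{u:\theta^*+T^{-1/2}u\in\overline\Theta\}.$$
The QLA template reduces the convergence of moments (\ref{170904-1}) for both $\hat u^M_T$ and $\hat u^B_T$ to three ingredients: (i) an $L^p$ identification of the normalized observed information as $\Gamma$; (ii) a martingale central limit theorem for the score; (iii) a polynomial-type large deviation inequality (PLDI) for $\mathbb Z_T$ uniformly in $T$.

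\textbf{Score and information.} A direct differentiation of the softmax form (with $\theta^0\equiv 0$) gives, for $k\in\mathbb I_0$, $\partial_{\theta^k_j}\log r^i(t,\theta)=(\mathbf 1_{\{i=k\}}-\rho^k(\mathbb X(t),\theta))X_j(t)$. At the truth the weights $\rho^i(\mathbb X(t),\theta^*)$ coincide with $\lambda^i(t)/\Lambda(\lambda_0(t),\mathbb X(t))$, so the score
$$\partial_\theta\mathbb H_T(\theta^*)_{(k,j)}=\sum_{i\in\mathbb I}\int_0^T\bigl(\mathbf 1_{\{i=k\}}-\rho^k(\mathbb X(t),\theta^*)\bigr)X_j(t)\bigl(dN^i_t-\lambda^i(t)\,dt\bigr)$$
is a square-integrable martingale whose predictable bracket at $(k,j),(k',j')$ reduces, via the standard multinomial variance identity, to $\int_0^T{\sf V}_0(\mathbb X(t))_{k,k'}X_j(t)X_{j'}(t)\Lambda(\lambda_0(t),\mathbb X(t))\,dt$. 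Under [A1]--[A2], stationarity and rapid $\alpha$-mixing combined with the $L_{\infty-}$ exponential moments yield, through the ergodic theorem and Burkholder--Davis--Gundy inequalities, $L^p$ convergence of $T^{-1}$ times this bracket to $\Gamma$ for every $p$, and the martingale CLT gives $T^{-1/2}\partial_\theta\mathbb H_T(\theta^*)\Rightarrow\mathcal N(0,\Gamma)$.

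\textbf{Identifiability.} The limit $\mathbb Y(\theta)=\lim_T T^{-1}(\mathbb H_T(\theta)-\mathbb H_T(\theta^*))$ exists by the same ergodic argument and admits the closed form
$$\mathbb Y(\theta)=-E\biggl[\Lambda(\lambda_0(0),\mathbb X(0))\sum_{i\in\mathbb I}\rho^i(\mathbb X(0),\theta^*)\log\frac{\rho^i(\mathbb X(0),\theta^*)}{\rho^i(\mathbb X(0),\theta)}\biggr]\le 0,$$
a weighted Kullback--Leibler divergence. Hence $\mathbb Y(\theta)=0$ forces $\rho(\mathbb X(0),\theta)=\rho(\mathbb X(0),\theta^*)$ almost surely; strict convexity of the softmax log-partition reduces this to the a.s.\ linear constraint $\mathbb X(0)[\theta^i-(\theta^*)^i]=0$ for every $i\in\mathbb I_0$, which [A3] excludes. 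This is the "local non-degeneracy implies global identifiability" remark made in the excerpt.

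\textbf{PLDI and main obstacle.} The chief technical step is the PLDI: for every $L>0$ there exists $C_L$ such that
$$\sup_T P\Bigl[\sup_{r\le|u|,\,u\in U_T}\log\mathbb Z_T(u)\ge -c r^2\Bigr]\le C_L r^{-L}\qquad(r>0).$$
Following the QLA recipe one decomposes $\log\mathbb Z_T(u)$ into its Taylor quadratic form at the origin, controlled by the first two steps, plus a remainder bounded by moments of the observed information and of the third derivative of $\mathbb H_T$, taken uniformly on $\Theta$. These uniform $L^p$ bounds, together with the Sobolev-type increments on $\Theta$ needed by the functional PLDI, are where the argument is most delicate: one leans simultaneously on boundedness of $\Theta$, on rapid mixing [A2] to turn long-range dependence into polynomially decaying cumulants, and on the arbitrary-order integrability [A1] for the covariates and baseline intensity. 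Once the PLDI is established, the general QLA theorem yields (\ref{170904-1}) for $\mathsf A\in\{M,B\}$ with convergence of all polynomial moments.
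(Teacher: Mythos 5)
Your proposal follows essentially the same route as the paper's own proof: the same score formula $\partial_{\theta^k_j}\log r^i=(\mathbf 1_{\{i=k\}}-\rho^k)X_j$, the same multinomial-variance identification of the predictable bracket with $\Gamma$, the same Kullback--Leibler identifiability argument reducing $\overline{\mathbb Y}(\theta)=0$ to a linear constraint excluded by [A3], and the same conclusion via the polynomial-type large deviation inequality and the QLA master theorems of Yoshida (2011) for both QMLE and QBE (where the paper additionally checks the lower bound $\sup_T E[(\int_{\mathbb U_T}\mathbb Z_T(u)\,du)^{-1}]<\infty$ for the Bayesian case, which is part of the "general QLA theorem" you invoke). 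The plan is correct, with the uniform $L^p$/Sobolev estimates for the PLDI left as a sketch exactly where the paper fills them in via BDG, Rio's covariance inequality under [A2], and Sobolev embedding.
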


Proof is given in the appendix.

\begin{example}
\label{example:ExampleHawkesThm31}
Let $H$ be a Hawkes process with exponential kernel with parameters $(\mu, \alpha, \beta)\in(\mathbb R_+^*)^3$. For the sake of illustrating Theorem \ref{thm:Thm1}, we consider a version of model \eqref{eq:CoxModel} with $\bar i=1$, $\bar j=1$, and in which the common baseline intensity $\lambda_0$ is the intensity of the process $H$. We thus have:
\begin{equation}
	\lambda^i(t,\vartheta) = \left[ \mu + \int_0^t \alpha e^{-\beta(t-s)}\,dH(s) \right] \exp\left( \vartheta^i_1 X_1(t) \right),
	\quad i=0,1,
\end{equation}
where $X_1$ is a Markov chain with states $\{-1,1\}$ and infinitesimal generator $\left(\begin{matrix} \lambda_X & -\lambda_X \\ -\lambda_X & \lambda_X \end{matrix}\right)$.
Then we obviously have $\mathsf p=1$, i.e. $\theta=\vartheta^1_1-\vartheta^0_1$ is the single parameter to be fit.
In this specific case, one may explicitly compute the asymptotic variance of Equation \eqref{equation:GammaThm1}:
\begin{equation}
\Gamma = \frac{\mu}{1-\frac{\alpha}{\beta}} \frac{e^{\theta^*}}{(1+e^{\theta^*})^2} \left[ \cosh((\vartheta^*)^1_0) + \cosh((\vartheta^*)^1_1) \right] \in\mathbb R_+^*.
\end{equation}
We run 1000 simulations of the model with Hawkes parameters $\mu=0.5, \alpha=1, \beta=2$, covariate parameter $\lambda_X=0.5$, and horizon $T=1000$. True values of the parameters are $(\vartheta^*)^1_1=0.75$, $(\vartheta^*)^0_1=-0.75$ so that $\theta^*=1.5$. For each simulation we estimate $\theta$. The empirical density of the estimated values of $\theta-\theta^*$ is plotted on Figure \ref{figure:NumericalExperiment1Thm31} in dots and full line. On the same plot is provided in dashed lines a Gaussian distribution fitted on these estimated values, as well as the theoretical Gaussian distribution given by Theorem \ref{thm:Thm1}, i.e. with the asymptotic variance given at Equation \eqref{equation:GammaThm1}. Experimental values agree with the results of Theorem \ref{thm:Thm1}.
\begin{figure}
\begin{center}
	\includegraphics[scale=0.45, page=2]{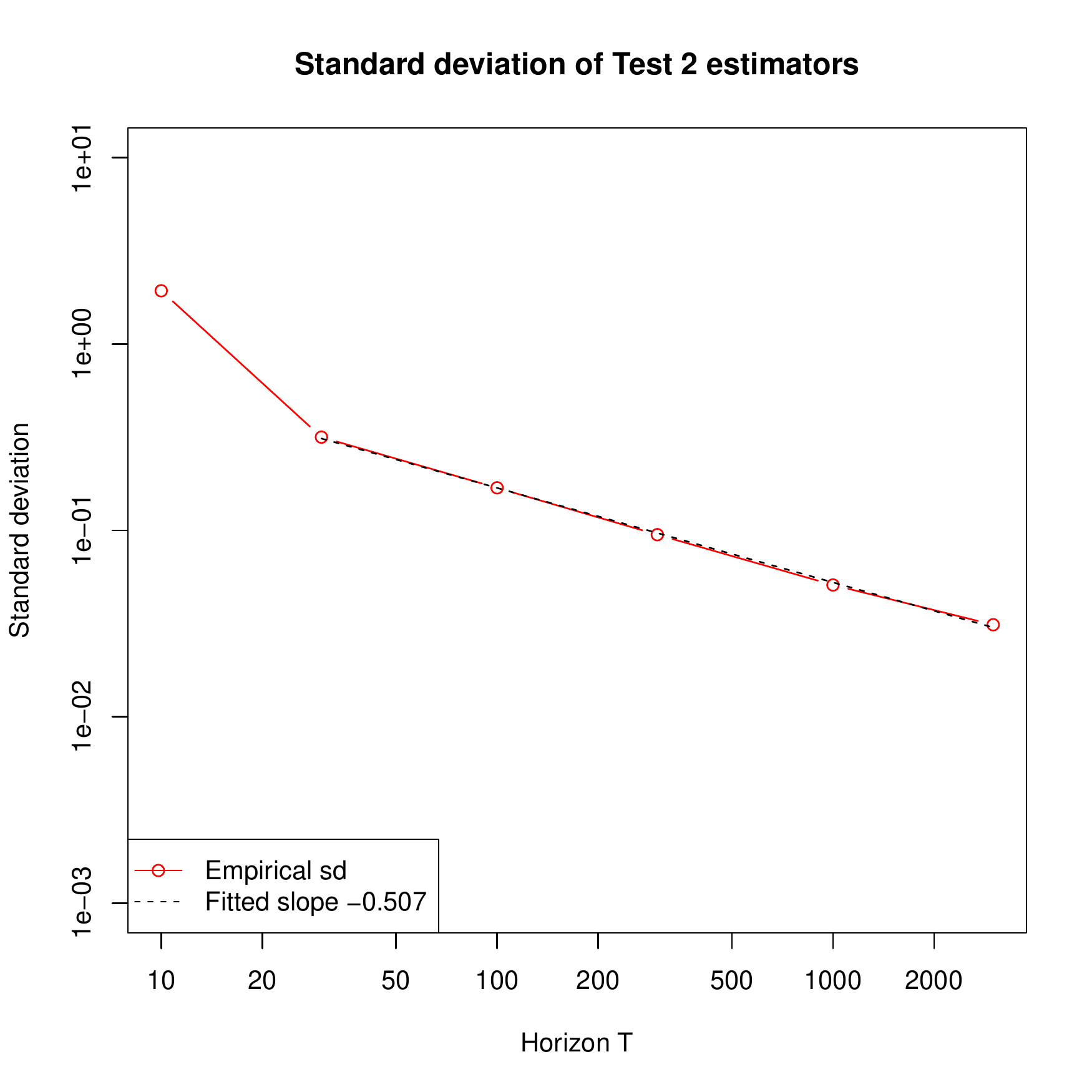}
	\caption{Distribution of the estimator of the estimation error $\theta-\theta^*$ for the model defined in Example \ref{example:ExampleHawkesThm31}. Asymptotic variance defined at Equation \eqref{equation:GammaThm1} is retrieved experimentally.}
	\label{figure:NumericalExperiment1Thm31}
\end{center}
\end{figure}
\end{example}

\begin{example}
This paper does not focus on the baseline intensity. However, in the case of a parametric model where the baseline intensity is fully specified, then the ratio approach is particularly helpful as it reduces the dimension of the space of the parameters, and can thus help improving numerical estimations by choosing appropriate starting points for the optimization routines. As an illustration, let us consider a model similar to the previous example, with a Hawkes process $H$ with $\bar k=2$ exponential kernels as baseline intensity, $\bar i=2$, i.e. 3 processes, and $\bar j=2$ covariates (same Markov chains as in the previous example):
\begin{equation}
	\lambda^i(t,\vartheta) = \left[ 1 + \sum_{k=1}^{\bar k}\int_0^t \alpha_k e^{-\beta_k(t-s)}\,dH(s) \right] \exp\left( \sum_{j=1}^{\bar j}\vartheta^i_j X_j \right).
\label{eq:Example2-Intensity}
\end{equation}
Maximum likelihood estimators of this model are directly computable when the process $H$ is observable. This requires an optimization on 10 parameters. As an alternative estimation procedure, we can estimate the ratio model (4 parameters $\theta^i_j$, $i=1,2$, $j=1,2$) and then estimate the full model likelihood with the constraints that the differences $\vartheta^i_j-\vartheta^0_j$ are kept equal to the estimated values $\theta^i_j$ (dimension 6). All 10 parameters are thus estimated. The estimated values can finally be used as a starting point for a final maximization of the likelihood of the full model, in order to ensure that we keep the desired properties of the maximum likelihood estimators.
Estimation results on simulations of the model \eqref{eq:Example2-Intensity} are presented in Table \ref{table:Example2-Results}. 
\begin{table}[htbp]
\centering
\footnotesize
\begin{tabular}{|c|c|c|c|c|c|c|c|c|c|c|}
\hline
 & $\alpha_0$ & $\alpha_1$ & $\beta_0$ & $\beta_1$ & $\vartheta^0_0$ & $\vartheta^0_1$ & $\vartheta^1_0$ & $\vartheta^1_1$ & $\vartheta^2_0$ & $\vartheta^2_1$
\\ \hline
True & 1.000 & 2.000 & 2.000 & 10.000 & 0.500 & 1.000 & 0.500 & -1.000 & -0.500 & 1.000 
\\ \hline
\multirow{2}{*}{Full} & 0.942 & 1.507 & 2.131 & 5.647 & 0.499 & 1.011 & 0.517 & -1.003 & -0.504 & 1.015 
\\
 & (0.819) & (1.150) & (0.878) & (2.492) & (0.013) & (0.013) & (0.049) & (0.020) & (0.004) & (0.019) 
\\ \hline
\multirow{2}{*}{Combined} & 0.994 & 1.993 & 1.981 & 9.923 & 0.500 & 1.000 & 0.500 & -1.000 & -0.500 & 1.000 
\\
  & (0.155) & (0.154) & (0.233) & (1.034) & (0.005) & (0.005) & (0.004) & (0.005) & (0.004) & (0.005) 
\\ \hline
\end{tabular}
\caption{Numerical results for the estimation of the 10 parameters of the model defined at Equation \eqref{eq:Example2-Intensity}. Standard deviations a given in parenthesis. See text for details.}
\label{table:Example2-Results}
\end{table}
\normalsize
``Full'' denotes the standard maximum likelihood estimation. ``Combined'' denotes the mixed method involving a ratio estimation as a first step. Optimizations are carried out with a Nelder-Mead algorithm (Python scipy implementation) with random starting points (using a standard Gaussian distribution). With these parameters and an horizon $T=10\,000$, samples have roughly $33\,000$ data points in average. Out of 197 tests, the ``Full' estimation converged only 10 times, while the ``Combined'' method converged 192 times out of 202. Estimates produced by the ``Combined'' method are closer to the true values and have a smaller empirical standard deviations by a factor 2 to 10 (except for $\vartheta^2_0$, already well-estimated in the ``Full'' case). Strong improvements are observed in the Hawkes parameters, where the standard ``Full'' method struggles to produce close estimates, while the ``Combined'' method starting with a ratio estimation yields much more accurate results.
\end{example}

\subsection{Repeated measurements}

We complete the previous analysis with a formulation of our estimation result that may be convenient in finance and in other fields of applications. We consider a sequence of observations of intraday data. 
The observations may be non-ergodic each day. 
We shall consider intervals $I^{(k)}=[O_k,C_k]$ ($k\in\mathbb N$) of the same length such that $0\leq O_1<C_1\leq O_2<C_2\leq\cdots$. We consider the counting processes $N^i=(N^i_t)_{t\in\mathbb R_+}$ of the previous section. However, in this section, the observations are $((N^i_t)_{i\in\mathbb I},\mathbb X(t))_{t\in I^{(k)}}$, $k=1,...,T$. 
As before, it is assumed that the point processes $N^i$ $(i\in\mathbb I)$ have no common jumps. The stationarity of each process $\big(\lambda^i(t,\vartheta^*)\big)_{t\in I^{(k)}}$ is not assumed here. 

We are interested in estimation of the parameter $\theta=(\theta^i_j)_{i\in\mathbb I_0,\>j\in\mathbb J}$ defined earlier by Equation \eqref{eq:DefinitionTheta}.
The estimation will be based on the random field $\mathbb H_T$ re-defined by 
\begin{equation}\label{eq:RepeatedQuasiLogLik} 
\mathbb H_T(\theta) = \sum_{k=1}^T\sum_{i\in\mathbb I}\int_{I^{(k)}}\log r^i(t,\theta)dN^i_t.
\end{equation}
Then the QMLE and the QBE are defined by Equations \eqref{eq:DefinitionQMLE} and \eqref{eq:DefinitionQBE}, respectively, but for $\mathbb H_T$ given by Equation \eqref{eq:RepeatedQuasiLogLik}. 

Let $\mathcal X_k=\big(\lambda_0(t),\mathbb X(t)\big)_{t\in I^{(k)}}$. Let $\mathcal G_k=\sigma[\mathcal X_1,...,\mathcal X_k]$ and let $\mathcal H_k=\sigma[\mathcal X_k,\mathcal X_{k+1},...]$. The $\alpha$-mixing coefficient for $\mathcal X=(\mathcal X_k)_{k\in\mathbb N}$ is defined by 
\begin{equation}
\alpha^\mathcal X(h) = \sup_{k\in \mathbb N} \sup_{A\in\mathcal G_k,\>B\in\mathcal H_{k+h}}\big|P[A\cap B]-P[A]P[B]\big|. 
\end{equation}

Let us then define the new conditions.
\begin{description}
	\item{\textbf{[C1]}} The sequence $(\mathcal X_k)_{k\in\mathbb N}$ is identically distributed. Moreover, $\sup_{t\in I^{(1)}}\|\lambda_0(t)\|_p<\infty$ and $\max_{j\in\mathbb J}\sup_{t\in I^{(1)}}\|\exp(p|X_j(t)|)\|_1<\infty$ for all $p>1$. 

	\item{\textbf{[C2]}} For every $L>0$, $\limsup_{h\to\infty}h^L\alpha^\mathcal X(h)<\infty$. 
\end{description}

Define the symmetric tensor $\Gamma$ by 
\begin{equation}
\Gamma[u^{\otimes2}] = E\bigg[\int_{I^{(1)}}\bigg({\sf V}_0(\mathbb X(t))\otimes \mathbb X(t)^{\otimes2}\bigg)[u^{\otimes2}] \Lambda(\lambda_0(t),\mathbb X(t))dt\bigg]
\end{equation} 
for $u\in\mathbb R^\mathsf p$. The matrix $\Gamma$ is nonnegative definite. More strongly we assume 
\begin{description}
\item{\textbf{[C3]}} $\det \Gamma>0$. 
\end{description}

In a way similar to Theorem \ref{thm:Thm1}, it is possible to prove the following theorem. 
\begin{theorem} 
Suppose that $[C1]$, $[C2]$ and $[C3]$ are satisfied. Then 
\begin{equation}
E\big[f(\hat{u}^{\sf A}_T)\big] \to E\big[f(\Gamma^{-1/2}\zeta)\big]\quad 
\end{equation}
as $T\to\infty$ 
for any $f\in C_p(\mathbb R^\mathsf p)$ and ${\sf A}\in \{M,B\}$. 
\end{theorem}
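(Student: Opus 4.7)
The plan is to mirror the proof of Theorem \ref{thm:Thm1}, replacing the single long-window ergodic average on $[0,T]$ by a sum over the $T$ identically distributed sampling windows $I^{(1)},\ldots,I^{(T)}$. The quasi-likelihood analysis (QLA) framework reduces the statement to four ingredients: (i) a uniform law of large numbers for the normalized random field $\mathbb Y_T(\theta)=T^{-1}\bigl(\mathbb H_T(\theta)-\mathbb H_T(\theta^*)\bigr)$ together with identification of a limit $\mathbb Y(\theta)$ uniquely maximized at $\theta^*$, (ii) a central limit theorem for the normalized score $T^{-1/2}\partial_\theta\mathbb H_T(\theta^*)$ with asymptotic covariance $\Gamma$, (iii) uniform convergence of the normalized observed information to $\Gamma$, and (iv) a polynomial-type large deviation inequality (PLDI) for $\mathbb Y_T$. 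Items (i)--(iii) give asymptotic normality; combined with (iv), they upgrade this to convergence of moments.

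For (i), I would decompose $\mathbb H_T(\theta)-\mathbb H_T(\theta^*)=M_T(\theta)+A_T(\theta)$, where
\begin{equation*}
A_T(\theta)=\sum_{k=1}^T\sum_{i\in\mathbb I}\int_{I^{(k)}}\log\frac{r^i(t,\theta)}{r^i(t,\theta^*)}\,\lambda^i(t,\vartheta^*)\,dt,
\end{equation*}
and $M_T(\theta)$ is the corresponding compensated point-process martingale. Under [C1] the per-window summands of $A_T$ are identically distributed with moments of all orders (the integrand is a log-sum-exp of linear functions of the $X_j(t)$, controlled by the assumption that $\exp(p|X_j|)\in L_1$ for every $p$); under [C2] they are $\alpha$-mixing with rapidly decaying coefficients. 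A standard ergodic theorem for stationary mixing sequences then yields $T^{-1}A_T(\theta)\to\mathbb Y(\theta)$ pointwise, upgraded to uniform convergence on the bounded domain $\overline\Theta$ by a Kolmogorov-Chentsov tightness estimate on $\partial_\theta A_T$. The limit $\mathbb Y(\theta)$ is nonpositive by Jensen's inequality applied pointwise to the probability weights $(r^i(t,\theta^*))_{i\in\mathbb I}$ against the total rate $\Lambda(\lambda_0,\mathbb X)$, and vanishes only at $\theta^*$ by [C3] via the same local-to-global identifiability argument invoked in Theorem \ref{thm:Thm1}. The martingale part $T^{-1}M_T(\theta)$ vanishes by Lenglart's inequality applied to its predictable quadratic variation.

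For (ii), writing $T^{-1/2}\partial_\theta\mathbb H_T(\theta^*)=T^{-1/2}\sum_{k=1}^T\xi_k$ with $\xi_k=\sum_i\int_{I^{(k)}}\partial_\theta\log r^i(t,\theta^*)\bigl[dN^i_t-\lambda^i(t,\vartheta^*)dt\bigr]$, a direct computation of the predictable covariation identifies $E[\xi_1^{\otimes 2}]$ with the tensor $\Gamma$ via the multinomial variance ${\sf V}_0$, exactly as in the stationary setting. A CLT for stationary mixing arrays then delivers the limiting $N(0,\Gamma)$. Step (iii) follows by applying the same mixing LLN to the Hessian $-T^{-1}\partial_\theta^2\mathbb H_T(\theta)$, exploiting the uniform smoothness of $\theta\mapsto\log r^i(t,\theta)$ on $\overline\Theta$.

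The principal obstacle is (iv). It requires moment bounds $\sup_T E\bigl[\sup_{\theta\in\overline\Theta}|\partial_\theta^\ell\mathbb Y_T(\theta)|^p\bigr]<\infty$ for all $p>1$ and every order $\ell$ needed by the PLDI, which via a Sobolev embedding on $\overline\Theta$ produce uniform sub-exponential tail bounds on the random field and hence $L_p$-boundedness of $\hat u^M_T$ and $\hat u^B_T$. The repeated-measurement setting is particularly favorable here: each window-level summand lies in $\bigcap_{p>1}L_p$ by [C1], the mixing coefficients decay faster than any polynomial by [C2], and a Rosenthal-type inequality for partial sums of rapidly mixing stationary arrays closes the bounds uniformly in $T$. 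Once the PLDI is established, the general QLA machinery (as in \cite{MuniTokeYoshida2017} and the references therein) combines (i)--(iv) to conclude convergence of moments of $\hat u^M_T$ and $\hat u^B_T$ to those of $\Gamma^{-1/2}\zeta$ against every $f\in C_p(\mathbb R^\mathsf p)$.
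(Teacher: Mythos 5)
Your proposal is correct and follows essentially the route the paper intends: the paper omits this proof, stating only that it is obtained ``in a way similar to Theorem \ref{thm:Thm1}'', and your martingale-plus-compensator decomposition, per-window mixing LLN/CLT with the multinomial-variance identification of $\Gamma$, Jensen-based identifiability via [C3], and the polynomial-type large deviation inequality feeding the QLA moment-convergence theorems are exactly the adaptation of Lemmas \ref{170926-1}--\ref{170926-2} to sums over the identically distributed windows $I^{(k)}$. The only cosmetic differences (Kolmogorov--Chentsov tightness versus the paper's Sobolev embedding for uniformity, and ``sub-exponential'' where the Yoshida machinery in fact delivers polynomial tails of every order, which is all that is needed) do not affect correctness.
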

The proof is omitted.

\section{Information criteria and penalization}
\label{section:Penalization}

Since the ratio model flexibly incorporates various covariates processes and since it may have a large number of parameters, we need information criteria for model selection and other regularization methods for sparse estimation. Though the inference of the ratio model is based on the quasi-likelihood analysis, we can still apply information criterion like CAIC (consistent AIC) and BIC by using $\mathbb{H}_T$. See \cite{bozdogan1987model} for exposition of information criteria for model selection. 

Let $a_T$ be a sequence of positive numbers such that $a_T\to\infty$ and $a_T/T\to0$ as $T\to\infty$. Let $\mathbb{K}\subset\mathbb{I}_0\times\mathbb{J}$. We consider a sub-model $S_\mathbb{K}$ of $\Theta$ such that 
\begin{equation} 
S_\mathbb{K} = \{\theta\in\Theta;\>\theta^i_j=0\ ((i,j)\in\mathbb{K}^c)\}. 
\end{equation}
Let 
\begin{equation} 
C_T(S_\mathbb{K}) = -2\mathbb{H}_T(\hat{\theta}_\mathbb{K})+\mathsf{d}(S_\mathbb{K})a_T
\end{equation}
where $\hat{\theta}_\mathbb{K}$ (depending on $T$) is denoting the QMLE or QBE in the sub-model $S_\mathbb{K}$ and $\mathsf{d}(S_\mathbb{K})$ is the dimension of $S_\mathbb{K}$. More precisely, the QMLE $\hat{\theta}^M_{\mathbb K}$ is defined as an estimator that satisfies 
\begin{equation}
	\mathbb H_T(\hat{\theta}^M_{\mathbb K}) = \max_{\theta\in \overline{S_\mathbb K}}\mathbb H_T(\theta),
\end{equation}
and the QBE $\hat{\theta}^B_{\mathbb K}$ is defined by 
\begin{equation}
\hat{\theta}^B_{\mathbb K} = \bigg[\int_{S_\mathbb K}\exp\big(\mathbb H_T(\theta)\big)\varpi_{S_\mathbb K}(\theta)\bigg]^{-1} \int_{S_\mathbb K}\theta\exp\big(\mathbb H_T(\theta)\big)\varpi_{S_\mathbb K}(\theta)d\theta
\end{equation}
for a continuous prior density $\varpi_{S_\mathbb K}$ on $S_\mathbb K$ satisfying $0<\inf_{\theta\in S_\mathbb K}\varpi_{S_\mathbb K}(\theta)\leq
\sup_{\theta\in S_\mathbb K}\varpi_{S_\mathbb K}(\theta)<\infty$. We denote by $S_{\mathbb{K}^*}$ the minimum model that includes $\theta^*$, in other words, $(\theta^*)^i_j\not=0$ if and only if $(i,j)\in\mathbb{K}^*$. The QMLE and QBE for $\theta$ restricted to the sub-model $S_{\mathbb K^*}$ 
are generically denoted by $\hat{\theta}_{\mathbb K^*}$. 
In particular, 
\begin{equation}
C_T(S_{\mathbb K^*}) = -2\mathbb H_T(\hat{\theta}_{\mathbb K^*})+\mathsf d(S_{\mathbb K^*})a_T.
\end{equation}

\begin{proposition}\label{20180504-1}
Suppose that $[A1]$, $[A2]$ and $[A3]$ are satisfied. 
Then
\begin{description}
\item[(i)] If $\theta^*\not\in S_\mathbb{K}$, then 
\begin{equation}
C_T(S_\mathbb{K})-C_T(S_{\mathbb{K}^*}) \to^p \infty \qquad(T\to\infty)
\end{equation}
\item[(ii)] If $\theta^*\in S_\mathbb{K}$ and $S_\mathbb{K}\not=S_{\mathbb{K}^*}$, then 
\begin{equation}
C_T(S_\mathbb{K})-C_T(S_{\mathbb{K}^*}) \to^p \infty \qquad(T\to\infty)
\end{equation}
\end{description}
\end{proposition}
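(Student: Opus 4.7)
The plan is to decompose
\begin{equation*}
C_T(S_\mathbb K)-C_T(S_{\mathbb K^*})
=
-2\bigl[\mathbb H_T(\hat{\theta}_\mathbb K)-\mathbb H_T(\hat{\theta}_{\mathbb K^*})\bigr]
+\bigl[\mathsf d(S_\mathbb K)-\mathsf d(S_{\mathbb K^*})\bigr]a_T
\end{equation*}
and to treat the likelihood part and the penalty part separately in the two regimes. In both cases the penalty contribution is $O(a_T)$, so the task is to control the likelihood difference at the proper rate.

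For part (i), the key input is a uniform law of large numbers on $\overline{\Theta}$ for the normalized quasi-log-likelihood. Under [A1]--[A2], writing $dN^i_t = \lambda^i(t,\vartheta^*)\,dt + dM^i_t$ and using stationarity together with rapid $\alpha$-mixing, one gets
\begin{equation*}
\frac{1}{T}\bigl[\mathbb H_T(\theta)-\mathbb H_T(\theta^*)\bigr]
\ \to^p\
\mathbb Y(\theta)
:=
E\!\left[\sum_{i\in\mathbb I}\bigl(\log r^i(0,\theta)-\log r^i(0,\theta^*)\bigr)\lambda^i(0,\vartheta^*)\right]
\end{equation*}
uniformly in $\theta\in\overline{\Theta}$, after checking a tightness/equicontinuity bound on the family $\theta\mapsto T^{-1}\mathbb H_T(\theta)$ (this uses the boundedness of $\Theta$ and the moment bounds in [A1]). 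A standard Jensen/Kullback argument shows $\mathbb Y(\theta)\leq 0$ with $\mathbb Y(\theta)=0$ iff $r^i(0,\theta)=r^i(0,\theta^*)$ a.s. for every $i$, which under [A3] (non-degeneracy of $\Gamma$, giving injectivity of $\theta\mapsto\mathbb X(0)[\theta^i-\theta^{i'}]$ in $L^2$) forces $\theta=\theta^*$. Since $\theta^*\notin\overline{S_\mathbb K}$ and $\overline{S_\mathbb K}$ is compact, $\sup_{\theta\in\overline{S_\mathbb K}}\mathbb Y(\theta)=-c$ for some $c>0$, hence
\begin{equation*}
-2\bigl[\mathbb H_T(\hat{\theta}_\mathbb K)-\mathbb H_T(\hat{\theta}_{\mathbb K^*})\bigr]
\ \geq\ 2cT+o_p(T),
\end{equation*}
which dominates the $O(a_T)$ penalty term because $a_T/T\to 0$.

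For part (ii), we have $\theta^*\in S_\mathbb K\supsetneq S_{\mathbb K^*}$, so Theorem \ref{thm:Thm1} applies to both $S_\mathbb K$ and $S_{\mathbb K^*}$ (the restricted $\Gamma$ on each sub-model is non-degenerate because it is a principal submatrix of the full $\Gamma$). The consistent estimators $\hat{\theta}_\mathbb K$ and $\hat{\theta}_{\mathbb K^*}$ satisfy $\sqrt T(\hat{\theta}_\mathbb K-\theta^*)=O_p(1)$ and similarly for $\hat{\theta}_{\mathbb K^*}$. A second order Taylor expansion of $\mathbb H_T$ at $\theta^*$, combined with the convergence $T^{-1}\partial_\theta^2\mathbb H_T(\theta^*)\to^p -\Gamma$ and the CLT $T^{-1/2}\partial_\theta\mathbb H_T(\theta^*)\to^d N(0,\Gamma)$, yields
\begin{equation*}
\mathbb H_T(\hat{\theta}_\mathbb K)-\mathbb H_T(\theta^*)
=\tfrac{1}{2}\hat{u}^\mathsf{T}_\mathbb K\Gamma_\mathbb K\hat{u}_\mathbb K+o_p(1)
=O_p(1),
\end{equation*}
where $\hat{u}_\mathbb K=\sqrt T(\hat{\theta}_\mathbb K-\theta^*)$, and the same for $\mathbb K^*$. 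Therefore $-2[\mathbb H_T(\hat{\theta}_\mathbb K)-\mathbb H_T(\hat{\theta}_{\mathbb K^*})]=O_p(1)$, so the penalty gap $[\mathsf d(S_\mathbb K)-\mathsf d(S_{\mathbb K^*})]a_T\to\infty$ determines the asymptotics, and the sign is correct because $\mathsf d(S_\mathbb K)>\mathsf d(S_{\mathbb K^*})$ by minimality of $\mathbb K^*$.

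The main obstacle I expect is the first step: establishing the uniform LLN and the global identifiability $\mathbb Y(\theta)=0\Rightarrow\theta=\theta^*$. The local non-degeneracy in [A3] is what one really assumes, and the passage to global identifiability proceeds through the specific softmax structure of $r^i(t,\theta)$ (equation \eqref{equation:RatioWithRespectToTheta}), which the authors already advertise after the statement of [A3]. The Taylor expansion and moment convergence for part (ii) will rely on the same quasi-likelihood analysis machinery underlying Theorem \ref{thm:Thm1}, and should be largely a bookkeeping matter once that is in hand.
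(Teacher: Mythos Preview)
Your proposal is correct and follows essentially the same approach as the paper. The paper's proof uses precisely the decomposition you write, invokes the uniform convergence of $T^{-1}(\mathbb H_T(\theta)-\mathbb H_T(\theta^*))$ to $\overline{\mathbb Y}(\theta)$ (their Lemma~\ref{170926-1}(ii)) together with the quadratic identifiability bound $\overline{\mathbb Y}(\theta)\leq -\chi_0|\theta-\theta^*|^2$ (their \eqref{290925-7}) for part (i), and the LAN/Taylor expansion giving $2(\mathbb H_T(\hat\theta_\mathbb K)-\mathbb H_T(\theta^*))=\Gamma_\mathbb K[(\hat u_\mathbb K)^{\otimes 2}]+o_p(1)=O_p(1)$ for part (ii); your anticipated ``obstacles'' (uniform LLN, global identifiability from [A3] via the softmax structure) are exactly the ingredients established earlier in the appendix and simply cited in the proof of the proposition.
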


A proof of Proposition \ref{20180504-1} is given in Appendix for selfcontainedness. According to Proposition \ref{20180504-1}, we should select a model $S_{\widehat{\mathbb{K}}}$ that attains $\min\{C_T(S_\mathbb{K});\>\mathbb{K}\subset\mathbb{I}_0\times\mathbb{J}\}$. Then the selection consistency holds as 
\begin{equation} 
P\big[\widehat{\mathbb{K}}=\mathbb{K}^*\big] \to 1\qquad(T\to\infty).
\end{equation}
Based on the quasi-likelihood function $\mathbb{H}_T$, the criterion $C_T(S_\mathbb{K})$ gives the quasi-consistent AIC (QCAIC) when $a_T=\log T+1$, and the quasi-BIC (QBIC) when $a_T=\log T$ among 
many other possible choices of $a_T$. A Hannan and Quinn type of quasi-information criterion (QHQ) is the case where $a_T=c\log\log T$ for $c>2$. The quasi-AIC (QAIC) is the case where $a_T=2$ but it cannot exclude overfitting though it excludes misspecified models, as usual and as suggested in the proof of Proposition \ref{20180504-1}. 

Recently penalized quasi-likelihood analysis for sparse estimation has been developed. We consider a penalty function $p_\lambda$ such that $p_\lambda(x)=p_\lambda(-x)$, $p_\lambda(0)=0$, $p_\lambda$ is non-decreasing on $x\geq0$, $p_\lambda$ is differentiable except for $x=0$, and $\lim_{x\downarrow 0}x^{-q}p_\lambda(x)=\lambda>0$ for some $q\in(0,1]$. This class of penalty functions includes the ones of LASSO (\cite{tibshirani1996regression}), Bridge (\cite{frank1993statistical}) and SCAD (\cite{fan2001variable}). Suppose that we have a QLA based on the random field $\mathbb{H}_T$. Then we have the penalized contrast function 
\begin{equation}\label{20180504-10} 
-\mathbb{H}^\dagger_T(\theta) = -\mathbb{H}_T(\theta)+\sqrt{T}\sum_{i\in\mathbb{I}_0,j\in\mathbb{J}} p_\lambda(\theta^i_j)
\end{equation}
and the penalized QMLE $\hat{\theta}_\lambda$ (depending on $T$) is associated by
\begin{equation}\label{20180504-11} 
\hat{\theta}_\lambda \in \text{argmin}_{\theta\in\overline{\Theta}} \big\{-\mathbb{H}^\dagger_T(\theta)\big\}.
\end{equation}
In the case $q=1$, the polynomial type large deviation inequality is inherited from $\mathbb{H}_T$ to $\mathbb{H}_T^\dagger$, as a result, we obtain asymptotic properties of $\hat{\theta}_\lambda$ as well as $L^p$-boundedness of the error. Asymptotic distribution becomes slightly involved. In a similar way, in the case $q<1$, it is possible to derive asymptotic properties of $\hat{\theta}_\lambda$. A prominent property is selection consistency. Thanks to the QLA theory having a polynomial type large deviation inequality, we can prove that the probability of correct model selection is $1-O(T^{-L})$ as $T\to\infty$ for every $L>0$. See \cite{kinoshita-yoshida2018} for details. An advantage of working with the QLA theory is that asymptotic properties of regularization methods are obtained in a unified manner without using any specific nature of the stochastic process. The QLA was used in \cite{umezu2015aic} for a generalized linear model. 

Another approach is toward the adaptive LASSO (\cite{zou2006adaptive}) and the least square approximation method (\cite{wang2007unified}). \cite{de2012adaptive} took this approach to sparse estimation of ergodic diffusions. Since we can assume strong mode of convergence for the initial estimator constructed within the QLA framework, we can obtain selection consistency with error probability that is of $O(T^{-L})$ for any $L>0$ as well as the oracle properties of the penalized estimator. For details, see \cite{suzuki-yoshida2018}. 

Related to regularization methods for point processes, \cite{fan2002variable} extended the nonconcave penalized likelihood approach to the Cox proportional hazards model. \cite{yue2015variable} treated variable selection in spatial point processes. \cite{hansen2015lasso} derived probabilistic inequalities for multivariate point processes in the context of LASSO.

\section{Empirical results - Dependencies analysis}
\label{section:EmpiricalResults}

This section describes the high-frequency financial data (Section \ref{subsec:Data}) and several empirical studies conducted with it. A detailed example of financial analysis allowed by the ratio model is provided in Section \ref{subsec:MBMAImbalance}, where the determination of the sign of the next trade in a limit order book is carried out with several covariates. Other examples are provided in subsequent sections, illustrating the influence of the spread and queue sizes in the order book.

\subsection{Data}
\label{subsec:Data}

We use Reuters TRTH tick by tick data for 36 stocks traded on the Paris stock Exchange on most of the year 2015. The list of the stocks is given in Appendix \ref{sec:ricList}. For each stock and each trading day, the orders flow is reconstructed using the method described in \cite{MuniToke2017}, and we keep all trades and quotes recorded between 9:30 am and 5:00 pm\footnote{Removing the beginning and the trading day has been done as a usual precautionary step when dealing with high-frequency trades and quotes databases, as one may sometimes be concerned with data quality in very busy periods. However, this precaution may actually not be necessary. For example, when recomputing Figure \ref{figure:ProbabilityImbalanceLastSpread} below using the whole trading day, no we observe no significant visual difference with the version presented in this paper.}. 
When adding all stocks and trading days, the sample represents more than one billion events (more than $50$ millions market orders, more than $500$ millions limit orders, and more than $450$ millions cancellations).

All models are numerically estimated using \verb+R+ by quasi-likelihood maximization. For practical purposes, we provide practical explicit expressions for the log partial likelihood. Using Equation \eqref{equation:RatioWithRespectToTheta}, one may write Equation \eqref{eq:QuasiLogLik} with respect to $\theta$ as:
\begin{align}
	\mathbb H_T(\theta) = & - \sum_{i\in\mathbb I} \int_0^T \log \left[ \sum_{i'\in\mathbb I} \exp\left( \sum_{j\in\mathbb J} ( \theta^{i'}_j - \theta^i_j) X_ j(t)\right)\right] \,dN^i(t)
	\nonumber \\
	= & - \sum_{i\in\mathbb I} \int_0^T \log \left[ 1 + \sum_{i'\in\mathbb I\setminus\{i\}} \exp\left( \sum_{j\in\mathbb J} ( \theta^{i'}_j - \theta^i_j) X_ j(t)\right)\right] \,dN^i(t)
\label{equation:GeneralQuasiLogLikelihood}
\end{align}
Equation \eqref{eq:RepeatedQuasiLogLik} is similarly written. In the example case $\mathbb I=\{0,1,2\}$ with three counting processes, the quasi-log likelihood given at Equation \eqref{equation:GeneralQuasiLogLikelihood} is written:
\begin{align}
	\mathbb H_T(\theta) = & - \int_0^T \log \left[ 1 + \exp\left( \sum_{j\in\mathbb J} \theta^{1}_j X_ j(t)\right) + \exp\left( \sum_{j\in\mathbb J} \theta^{2}_j X_ j(t)\right) \right] \,dN^0(t) 
	\nonumber \\ & 
	- \int_0^T \log \left[ 1 +  \exp\left( - \sum_{j\in\mathbb J} \theta^1_j X_ j(t)\right) + \exp\left( \sum_{j\in\mathbb J} (\theta^2_j-\theta^1_j) X_ j(t)\right) \right] \,dN^1(t)
	\nonumber \\ & 
	- \int_0^T \log \left[ 1 +  \exp\left( - \sum_{j\in\mathbb J} \theta^2_j X_ j(t) \right) + \exp\left( \sum_{j\in\mathbb J} (\theta^1_j-\theta^2_j) X_ j(t)\right) \right] \,dN^2(t)
\end{align}

An important feature of the model may be underlined here. Many mathematical models of limit order books, and models in high-frequency finance in general, rely on simple point processes, i.e. processes for which one may not observe two simultaneous events. Confronting such modeling to empirical data thus often requires that the timestamps of all recorded events are different. This condition, even with resolutions down to milliseconds or even microseconds, is not verified on financial markets, where burst of activities are translated into data files by multiple events at the same timestamp. Randomization, i.e. adding a small random quantity to the observed timestamps, is often proposed as a patch to force all timestamps to be different. 
A nice feature of the setting proposed here is that such patches are not necessary : likelihood \eqref{equation:GeneralQuasiLogLikelihood} does not depend on interval times between consecutive events, but only on the number of events. Therefore, the model can be applied even on data with low resolution timestamps.

\subsection{Trades signs in a limit order book}
\label{subsec:MBMAImbalance}

It is well known that imbalance is good proxy to the sign of the next trade. Empirical observations can be found in e.g. \cite{Lipton2013}. \cite{Lehalle2017}, among others, is an attempt to incorporate this signal in trading strategies.
Let us define the imbalance in the order book observed at time $t$ by:
\begin{equation}
	i(t) = \frac{q_1^B(t)-q_1^A(t)}{q_1^B(t)+q_1^A(t)},
	\label{equation:imbalanceDefinition}
\end{equation}
where $q_1^A(t)$ and $q_1^B(t)$ are the number of shares available at time $t$ at the best quote on the ask side and bid side respectively. An imbalance close to $+1$ indicates a very small available volume on the ask side and consequently a probable upward price move, while an imbalance close to $-1$ increases the probability of a downward price move.
Using our model, we can infer how the imbalance affects the relative intensities of bid and ask market orders.
Let us assume that counting processes of ask market orders $N^{MA}$ and bid market orders $N^{MB}$ are point processes with intensities :
\begin{align}
	\lambda^{MA}(t,\vartheta^{MA}) = \lambda_0(t) \exp\left[\vartheta^{MA}_0+\vartheta^{MA}_1 i(t)\right],
	\nonumber \\ 
	\lambda^{MB}(t,\vartheta^{MB}) = \lambda_0(t) \exp\left[\vartheta^{MB}_0+\vartheta^{MB}_1 i(t)\right],
	\label{equation:Model-MBMA-Imbalance}
\end{align}
where $\lambda_0(t)$ is a potentially random baseline intensity.
Parameters $\theta_i=\theta^{MA}_i-\theta^{MB}_i$, $i=0,1$ can be straightforwardly estimated by maximization of the quasi-likelihood. As mentioned above, a very interesting feature of our intensity model is that the intensities ratios $r_{MA}$ and $r_{MB}$ are then directly interpretable as the instantaneous probability that the next market order will occur on the ask side and on the bid side respectively.

The model is fitted monthly for all stocks, from January 2015 to November 2015, which, excluding gaps in the database, gives 390 fits of the model. On Figure \ref{figure:ProbabilityImbalance}, we plot for two samples the empirical probability that for a given imbalance, the next trade occurs on the ask side, and the numerical estimation of the theoretical probability $r_{MA}(i)=\left[1+\exp(-\theta_0-\theta_1 i)\right]^{-1}$ for a level $i$ of imbalance given by the fit of our ratio model. For representativity we rank the 390 fits by increasing mean $L^2$ distance between empirical and fitted probabilities, and we select two samples representing the $20\%$ and $80\%$ quantiles of this error distribution. All 390 fits are available upon request.
\begin{figure}
\begin{center}
\begin{tabular}{cc}
\includegraphics[scale=0.42, page=2]{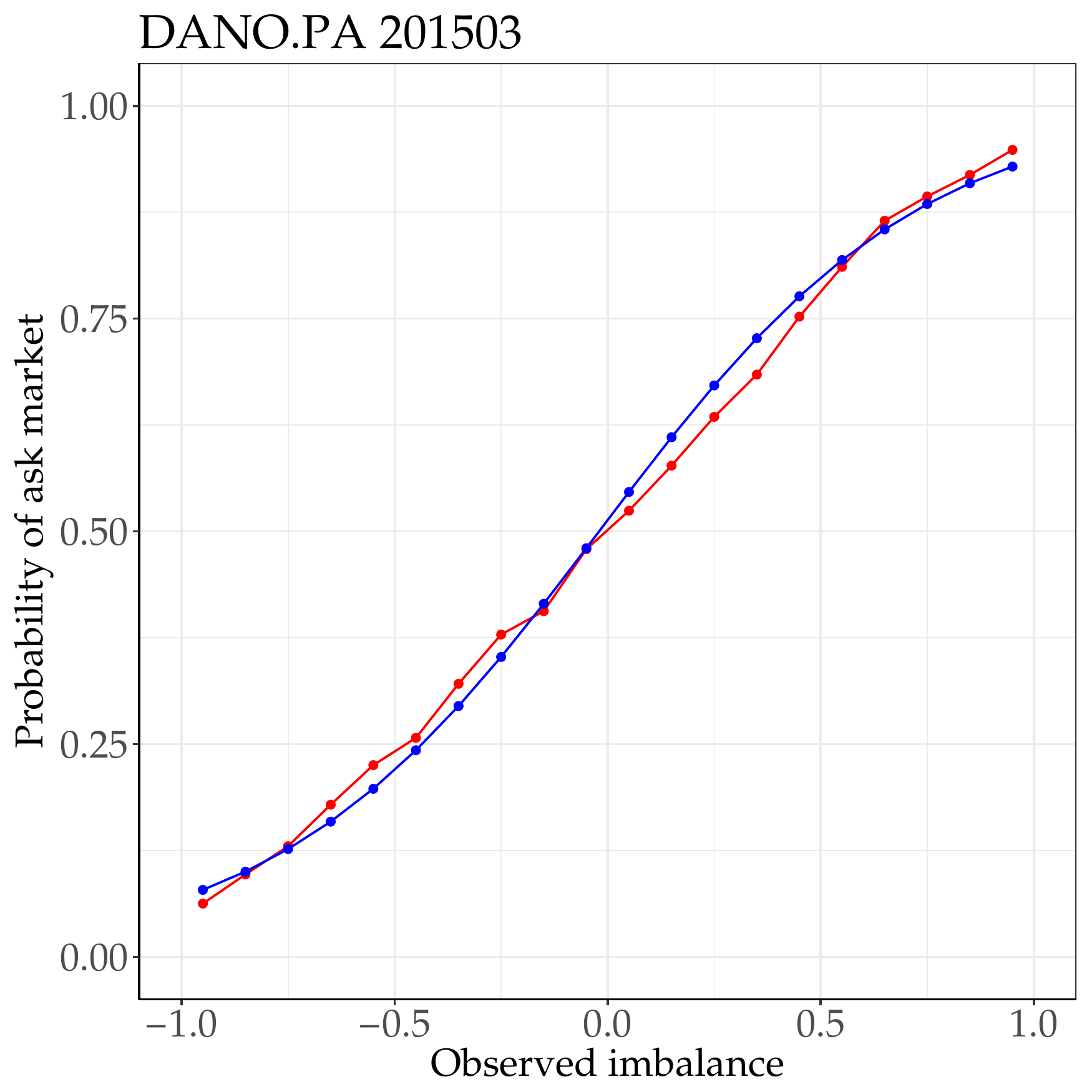}
&
\includegraphics[scale=0.42, page=1]{{Cox-MBMA-Imbalance-MONTHLY-SELECTED}.pdf}
\end{tabular}
\caption{Empirical probability (in red) and fitted probability (in blue) that the next market order is an ask market order, given the observed imbalance. Selected stocks show the $20\%$ (left) and $80\%$(right) quantiles measured in mean $L^2$ error.}
\label{figure:ProbabilityImbalance}
\end{center}
\end{figure}

The simple ratio model with one covariate provides a very good fit of the probability of the next trade. 
But we can obviously investigate further possible factors influencing the sign of the next trade. It has been shown that times series of signs of trades ($\epsilon(t)=+1$ for an ask trade at time $t$, $-1$ for a buy trade) exhibit slowly decreasing autocorrelation functions (see e.g., \cite{Lillo2004,Bouchaud2004}). We can include the sign of the last trade in the ratio model, and assume that counting processes of ask market orders $N^{MA}$ and bid market orders $N^{MB}$ are point processes with intensities :
\begin{align}
	\lambda^{MA}(t,\vartheta^{MA}) = \lambda_0(t) \exp\left[\vartheta^{MA}_0+\vartheta^{MA}_1 i(t) + \vartheta^{MA}_2 \epsilon(t)\right],
	\nonumber \\ 
	\lambda^{MB}(t,\vartheta^{MB}) = \lambda_0(t) \exp\left[\vartheta^{MB}_0+\vartheta^{MB}_1 i(t) + \vartheta^{MB}_2 \epsilon(t)\right],
	\label{equation:Model-MBMA-ImbalanceLast}
\end{align}
Figure \ref{figure:ProbabilityImbalanceLast} plots the updated results for two samples. For representativity, we again select the two samples representing the $20\%$ and $80\%$ quantile measured in mean $L^2$ error. All 390 fits are available upon request.
\begin{figure}
\begin{center}
\begin{tabular}{cc}
\includegraphics[scale=0.42, page=2]{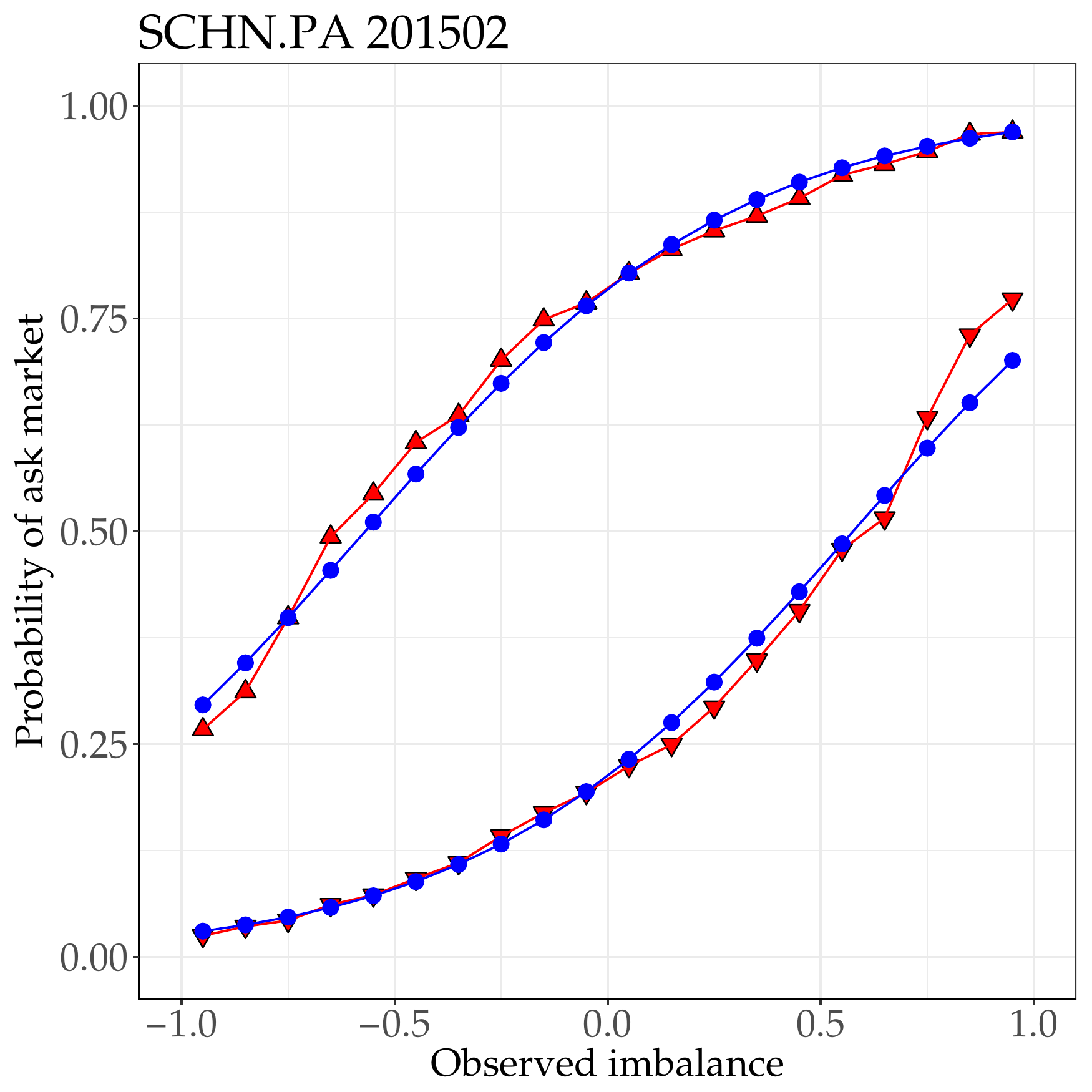}
&
\includegraphics[scale=0.42, page=1]{{Cox-MBMA-ImbalanceLastmarket-MONTHLY-SELECTED}.pdf}
\end{tabular}
\caption{Empirical probability (in red) and fitted probability (in blue) that the next market order is an ask market order, given the observed imbalance. Upward (resp. downward) triangles indicate that the last trade was an ask (resp. bid) market order. Selected stocks show the $20\%$ (left) and $80\%$(right) quantiles measured in mean $L^2$ error.}
\label{figure:ProbabilityImbalanceLast}
\end{center}
\end{figure}
It is interesting to observe the strong effect of the sign of the last trade on the imbalance predicting power. While a close to zero imbalance unconditionally indicate a close to $50\%$ probability of an ask trade (see Figure \ref{figure:ProbabilityImbalance}), this dramatically changes when the sign of the last trade comes into play. We now observe on Figure \ref{figure:ProbabilityImbalanceLast} two probability curves, one for each sign of the preceding trade, and these curves highlight a kind of hysteresis effect in sign determination with respect to the imbalance. For example, during a series of bid (resp. ask) market orders, the imbalance level at which an ask (resp. bid) market order becomes more probable is not around $0$, but higher (resp.lower). On the selected graphs the $50\%$ crossing point for the imbalance level is around $\pm0.5$ (this value may be stock and date dependent).

In a further step, we can investigate the role of the spread on these dynamics. In the case of a large spread observation, then priority can be achieved with limit orders, it is thus expected that the imbalance effect will be less pronounced (as found for example in \cite{Stoikov2017}) but that it will interact with $\epsilon$. We can thus use the following ratio model :
\begin{align}
	\lambda^{MA}(t,\vartheta^{MA}) = \lambda_0(t) \exp\left[\vartheta^{MA}_0+\vartheta^{MA}_1 i(t) + \vartheta^{MA}_2 \epsilon(t) + \vartheta^{MA}_3 \epsilon(t)s(t) \right],
	\nonumber \\ 
	\lambda^{MB}(t,\vartheta^{MB}) = \lambda_0(t) \exp\left[\vartheta^{MB}_0+\vartheta^{MB}_1 i(t) + \vartheta^{MB}_2 \epsilon(t) + \vartheta^{MA}_3 \epsilon(t)s(t)\right],
	\label{equation:Model-MBMA-ImbalanceLastSpread}
\end{align}
where $s(t)=+1$ if the observed spread is larger than its mean, and $-1$ if it is smaller. Obviously, one could use the spread value in ticks for finer models, but we choose categorical variable for graphic illustration purposes. Note that the spread distribution is very stable, so that the mean can for example be computed on the previous sample in the case of repeated fits. Note also that the spread distribution is positively skewed and bounded below by zero, so that $s(t)=-1$ can be interpreted as the usual spread case, and $s(t)=+1$ as the large spread case.
Figure \ref{figure:ProbabilityImbalanceLastSpread} plots the updated results for two samples. For representativity, we again select the two samples representing the $20\%$ and $80\%$ quantiles measured in mean $L^2$ error. All 390 fits are available upon request.
\begin{figure}
\begin{center}
\begin{tabular}{cc}
\includegraphics[scale=0.42, page=2]{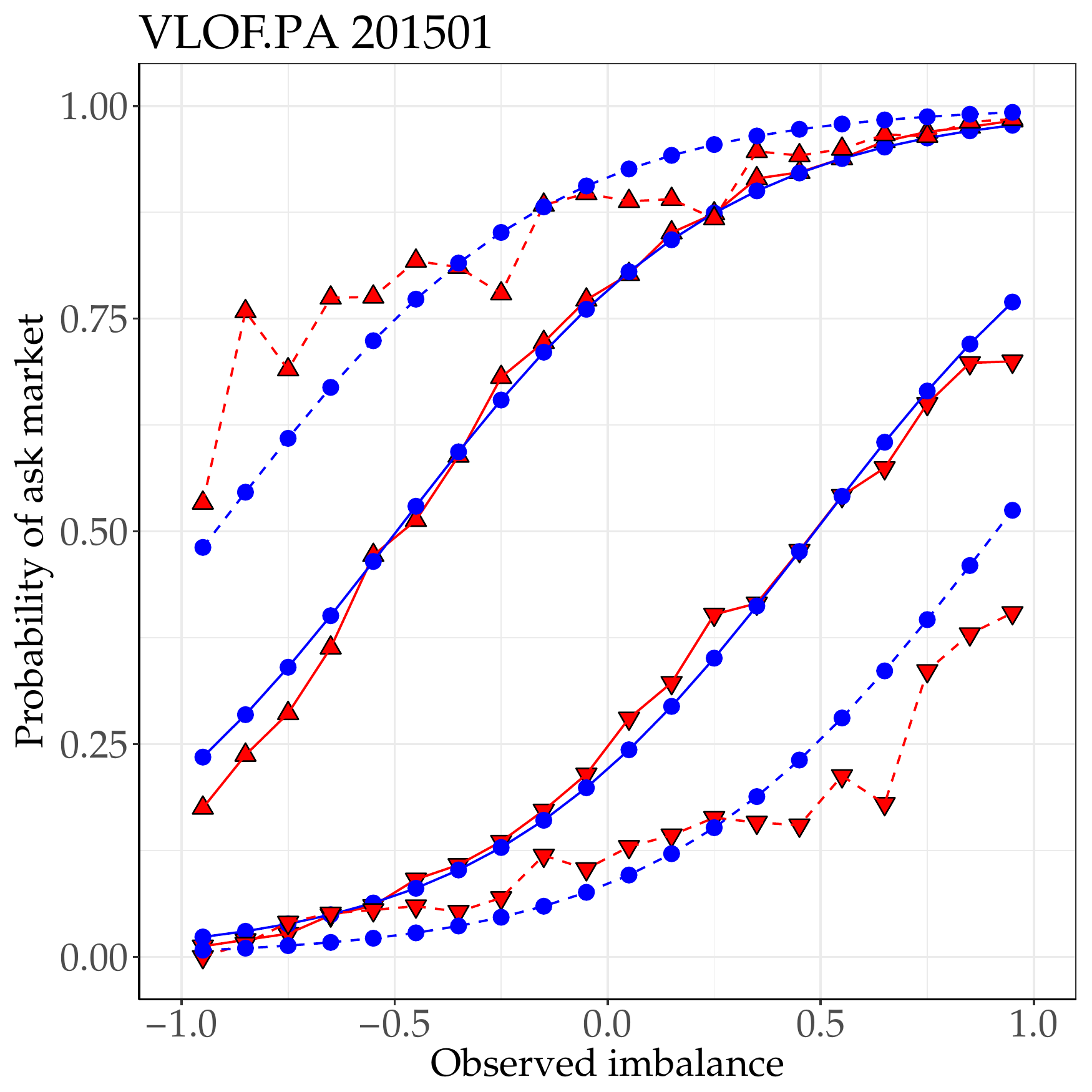}
&
\includegraphics[scale=0.42, page=1]{{Cox-MBMA-ImbalanceLastmarketSpread-MONTHLY-SELECTED}.pdf}
\end{tabular}
\caption{Empirical probability (in red) and fitted probability (in blue) that the next market order is an ask market order, given the observed imbalance. Upward (resp. downward) triangles indicate that the last trade was an ask (resp. bid) market order. Dashed (resp. full) lines represent large (resp. usual) spread values. Selected stocks show the $20\%$ (left) and $80\%$(right) quantiles measured in mean $L^2$ error.}
\label{figure:ProbabilityImbalanceLastSpread}
\end{center}
\end{figure}
We now have four curves representing the effect of the imbalance on the next trade sign, depending on the last trade sign and the current spread. Empirical curves are noisier as the samples are split into subsamples to compute conditional probabilities. However, we observe as expected that a large spread flattens the imbalance effect and reinforces the last sign importance. On the provided example fits, if the spread is larger than usual and the last trade was a bid, an ask market order becomes more probable than a bid market order when the imbalance nearly reaches $1$, compared to roughly $0.5$ when the spread is at its usual values, and to $0$ when neither the spread nor the last sign are taken into account.

Since we are exploring the role of several covariates, it is natural to apply the information criteria mentioned in Proposition \ref{20180504-1}. Among other possible sequences $a_T$, we use the QAIC for $a_T=2$, the QCAIC for $a_T=\log T+1$, and the QBIC for $a_T=\log T$, all based on the QMLE $\hat{\theta}_\mathbb{K}$. As an illustration, Table \ref{table:QIC} (left panel) shows the number of times various models are selected among the 390 samples, according to these information criteria. 
\begin{table}
\begin{center}
\begin{tabular}{|c|c||ccc||ccc|}
\hline
$\mathsf{d}$ & Model & QAIC & QCAIC & QBIC & QAIC & QCAIC & QBIC \\
\hline
2	& $i$	& 0	& 0	& 0 & 0	& 0	& 0 \\
3	& $i.\epsilon$	& 0	& 2	& 2 & 0	& 2	& 2 \\
4	& $i.\epsilon.s$	& 1	& 1	& 1 & 1	& 1	& 1 \\
4	& $i.\epsilon.\epsilon s$	& 79	& 162	& 158 & 48	& 118	& 114 \\
7	& $i.\epsilon.s+\textrm{int}$	& 310	& 225	& 229 & 218	& 175	& 178 \\
11  & $i.\epsilon.s.\delta+\textrm{int}$ & ---	& ---	& --- & 123	& 94	& 95 \\
\hline
\end{tabular}
\caption{Number of times a model for the intensity of bid and ask market orders is selected among the 390 samples, according to several information criteria. Models are named by the names of the covariates separated with a dot, with the notations defined in the text. "+int" means that all interaction terms are included in the model. E.g., "$i.\epsilon.\epsilon s$" is the model defined at Equations \eqref{equation:Model-MBMA-ImbalanceLastSpread}. Left panel investigates covariates $i,\epsilon$ and $s$ only, while right panel adds the last price $\delta$.}
\label{table:QIC}
\end{center}
\end{table}
Several comments can be made about this illustration. First of all, it turns out that the simplest model depending only on the imbalance is never selected. Recall for example that the weighted mid-price, commonly used in microstructure as a proxy for a "future" price, depends only on the imbalance. Our observation pleads for the incorporation of more covariates in defining such proxies. It also highlights that, as explained above, the spread acts primarily on the intensity of submission in interaction with the last sign: model with covariates $i, \epsilon$ and $s$ is (nearly) never selected, in contrast to the model with covariates $i, \epsilon$ and $\epsilon s$. Finally, we observe as expected that QAIC has a tendency to select the model with the greatest number of parameters~; however, using QCAIC or QBIC, the simple model of Equation \eqref{equation:Model-MBMA-ImbalanceLastSpread} is still selected on more than $40\%$ of the samples over the full model with all terms. These results show that in future works further investigations could be made in determining relevant covariates and possibly analyze a stock- or time- dependency. Among several possibilities, Table \ref{table:QIC} (right panel) gives selection results when we add the last price movement (denoted $\delta$) to ratio model.

Our investigation on the influence of the imbalance signal on the intensities of bid and ask market orders can also illustrate the use of the penalized QMLE described in Section \ref{section:Penalization}. In this example, we use the ratio model to try to decide whether traders use the imbalance as a trading signal on a given stock looking at the first level only, or at the first two levels, etc. Let us denote $i_k(t)$ the imbalance observed a time $t$ computed using the \emph{cumulative} quantities at the best quotes up to the $k$-th level, $k=1,\ldots,10$. $i_1$ is thus the imbalance previously investigated. For $k=2,\ldots,10$, let $\Delta i_k = i_k - i_{k-1}$ the corrective term between the imbalances computed with $k$ and $k-1$ limits.
We extend the basic model \eqref{equation:Model-MBMA-Imbalance} to include the standard imbalance as well as all the corrective terms as potential trading signals:
\begin{equation}
	\lambda^{T}(t,\theta^{T}) = \lambda_0(t) \exp\left[\theta^{T}_0 + \theta^{T}_1 i_1(t) + \sum_{k=2}^{10}\theta^{T}_k \Delta i_k(t)\right],
	\label{equation:Model-MBMA-AllImbalances}
\end{equation}
where $T\in\{MB,MA\}$. Let $\theta_k=\theta^{MA}_k-\theta^{MB}_k$, $k=1,10$ be our ratio parameters, to be estimated by likelihood maximization.
We estimate the model using both standard quasi-likelihood maximization, as well as a penalized estimation using the penarized QMLE $\hat{\theta}_\lambda$ of Equations \eqref{20180504-10} and \eqref{20180504-11}.
We compute daily fits of the model and plot the mean values of the estimated parameters for each stock.
Figure \ref{figure:AllImbalances-Delta-Penalization} shows the fitted parameters $\theta_k$ as a function of the level $k$, as well as the $95\%$ Gaussian empirical confidence interval computed with the empirical standard deviation. For brevity only two stocks are shown, and for simplicty we choose the same ones as in Figure \ref{figure:ProbabilityImbalanceLastSpread} but results for all 36 stocks are similar. 
\begin{figure}
\begin{center}
\begin{tabular}{cc}
\includegraphics[scale=0.42, page=1]{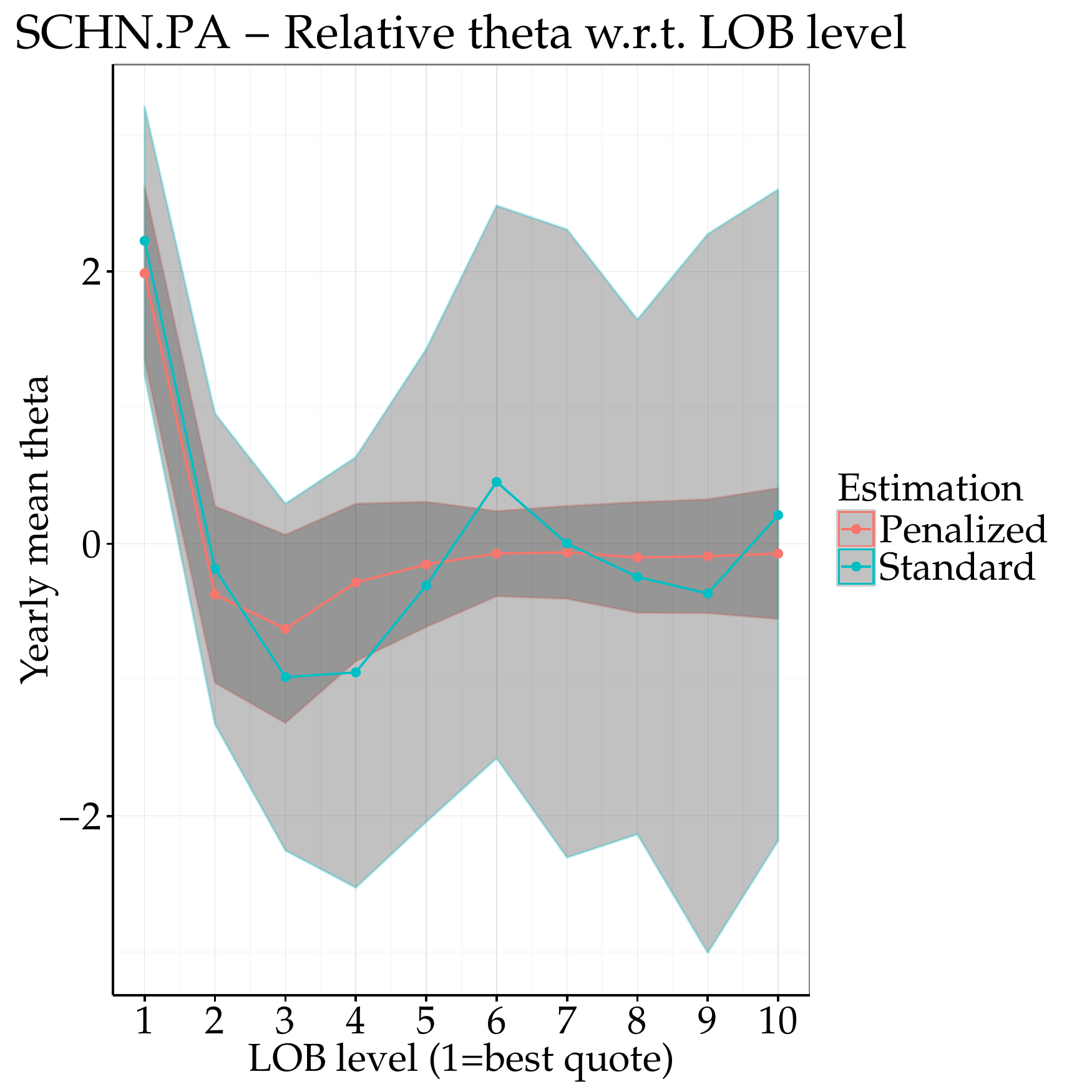}
&
\includegraphics[scale=0.42, page=2]{{Cox-MBMA-allImbalances-Delta-Penalization-BuildGraphs}.pdf}
\end{tabular}
\caption{Mean daily estimate (curves) and $95\%$ confidence interval (shaded areas) of $\theta_k$ for the imbalance model with corrective terms of Equation \eqref{equation:Model-MBMA-AllImbalances}. Same stocks as in Figure \ref{figure:ProbabilityImbalanceLastSpread}. In these numerical examples, we use $q=1$ and $\lambda=50 T^{-\frac{1}{2}}$.}
\label{figure:AllImbalances-Delta-Penalization}
\end{center}
\end{figure}
It turns out that the parameters associated to the imbalance at level 1 is always very significant, but that the additional information given by the corrective terms $\Delta i_k, k=2,\ldots,10$ is not translated into significant $\theta_k$, $k=2,\ldots,10$. Penalization greatly helps reducing the confidence intervals of estimates.
This results thus seem to be in favor of arguing that traders use imbalance at the first level as a significant trading signal, but do not significantly use information of higher levels.

\subsection{Spread and order flows at the best quotes}
\label{subsec:MLCSpread}

Spread as a potential trading signal seems less investigated in the microstructure literature than other factors. One potential reason is that the spread of many highly traded stocks is almost always equal to one tick (so-called large tick stocks). This property has given birth to specific models of such limit order books assuming a constant spread.
However, in the studied sample of 36 stocks traded on the Paris stock exchange in 2015, spread cannot be assume equal to one. As such, it must be considered a possible trading signal, influencing the order flows.
If the observed spread is large, then a trader wanting to buy a share would rather submit a bid limit order inside the spread than an ask market order. That way the trader will secure priority for the next sell market order, while obtaining a lower price. A market order should only be used by an impatient trader, or a trader believing in an incoming and lasting upward market movement. On the contrary, a spread equal to one tick prevents the use of limit orders to secure priority in future executions. Such mechanisms have been observed in \cite{MuniTokeYoshida2017} where data analysis shows that the empirical intensity of market orders increases when the spread decreases to one tick.

We can therefore propose a version of our ratio model to investigate the influence of the spread on orders flows occurring at the best quotes. 
Let us consider a model of best quotes of the limit order book where market orders, limit orders and cancellations are submitted with intensities $\lambda^M$, $\lambda^L$ and $\lambda^C$ respectively, all sharing an unobserved baseline intensity $\lambda_0(t)$ and depending on the observed spread $s(t)\in\mathbb{N^*}$ (i.e expressed in number of ticks) :
\begin{align}
	\lambda^{M}(t,\vartheta^{M}) & = \lambda_0(t) \exp\left[\vartheta^{M}_0+\vartheta^{M}_1 \log s(t) +\vartheta^{M}_2 (\log s(t))^2\right],
	\\ 
	\lambda^{L}(t,\vartheta^{L}) & = \lambda_0(t) \exp\left[\vartheta^{L}_0+\vartheta^{L}_1 \log s(t) +\vartheta^{L}_2 (\log s(t))^2\right],
	\\ 
	\lambda^{C}(t,\vartheta^{C}) & = \lambda_0(t) \exp\left[\vartheta^{C}_0+\vartheta^{C}_1 \log s(t) +\vartheta^{C}_2 (\log s(t))^2\right].
	\label{equation:Model-MLC-Spread}
\end{align}
Letting $\theta^L_j = \vartheta^{L}_j - \vartheta^{M}_j$ and $\theta^C_j = \vartheta^{C}_j - \vartheta^{M}_j$, $j=0,1,2$, our intensities ratio model is easily estimated by maximizing the quasi-log likelihood.

We estimate the model for each stock and each trading days, which gives after data cleaning 8052 different samples and associated model fits.
As in the previous cases, we compute the intensities ratios to estimate the probabilities of each event given the observed spread. Figure \ref{figure:ProbabilitySpread} plots the probabilities of market orders, limit orders and cancellations occurring at the best quotes given the observed spread, and compare them to the empirical probabilities computed on the sample. Again for the sake of brevity and representativity we rank the fits by increasing mean $L^2$ error between model and data, and select four stocks representing the $20\%$, $40\%$, $60\%$ and $80\%$ quantiles of the error distribution.
\begin{figure}
\begin{center}
\begin{tabular}{cc}
\includegraphics[scale=0.42, page=4]{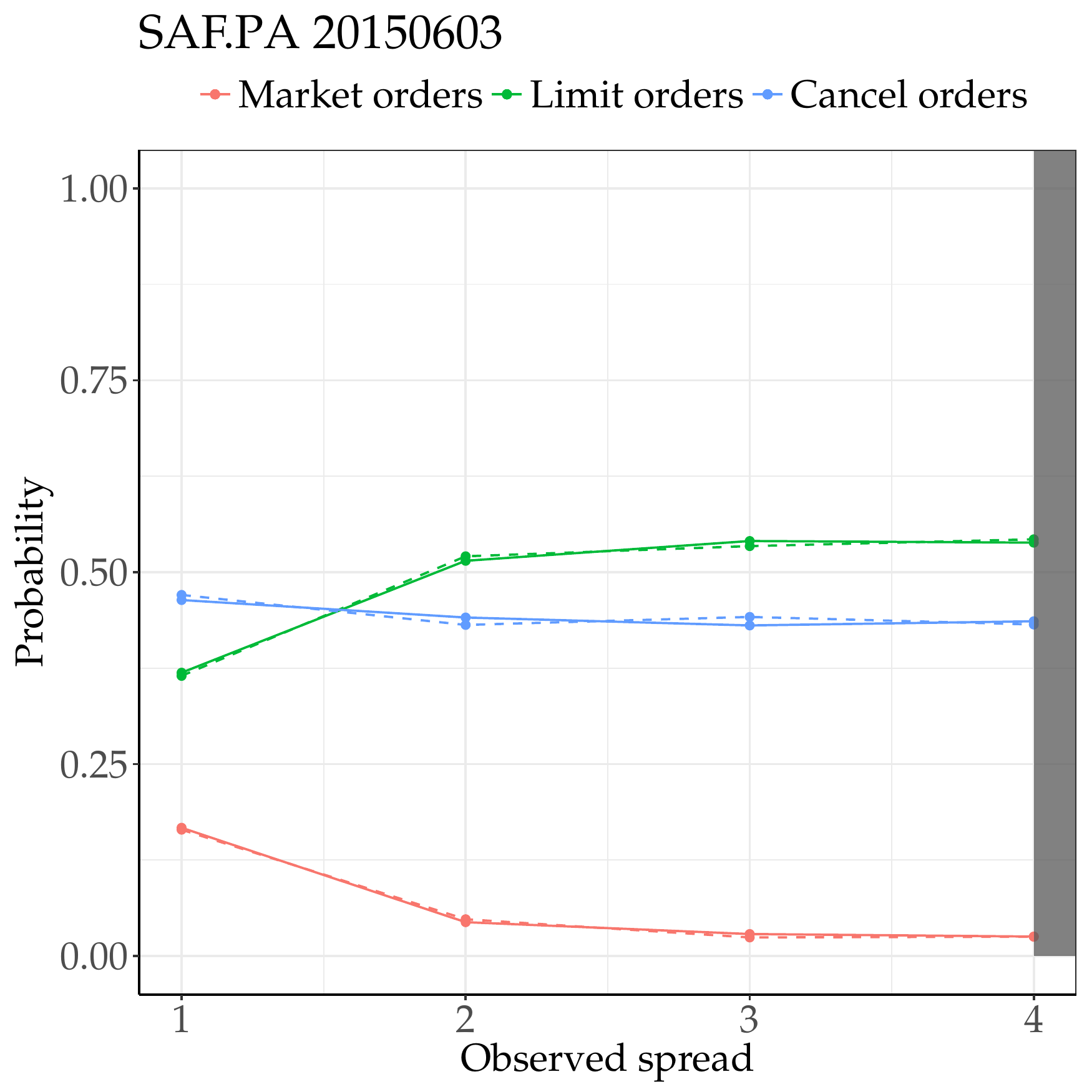}
&
\includegraphics[scale=0.42, page=3]{{Cox-MLC-Spread-LOG-DAILY-SELECTED}.pdf}
\\
\includegraphics[scale=0.42, page=2]{{Cox-MLC-Spread-LOG-DAILY-SELECTED}.pdf}
&
\includegraphics[scale=0.42, page=1]{{Cox-MLC-Spread-LOG-DAILY-SELECTED}.pdf}
\end{tabular}
\caption{Empirical probability (dashed lines) and fitted probability (full lines) that the next order is a market order (red), a limit order (blue) or an cancellation (green), given the observed spread. $x$-axis spans more than $99\%$ of the empirical spread distribution, and the shaded area on the right signal the $95\%$ and $99\%$ quantiles of this distribution. Selected samples represent the $20\%$ (top left), $40\%$ (top right), $60\%$ (bottom left) and $80\%$ (bottom right) quantiles of the mean $L^2$ error distribution among the 8052 tested samples.}
\label{figure:ProbabilitySpread}
\end{center}
\end{figure}
It is satisfying to observe that the model provide good fits in a wide range of spread distribution, from a large tick stock (Figure \ref{figure:ProbabilitySpread}, bottom left) where the spread is almost always equal to one tick, to the more interesting case of small tick stocks (e.g., Figure \ref{figure:ProbabilitySpread}, top right), where the probability is correctly estimated up to 9 ticks and more.

\subsection{Equilibrium behaviour with respect to the queues sizes}
\label{subsec:LalphaCalphaqalpha}

We propose a final illustration of the ratio model to investigate the role of the observed queue size in determining the flows of limit orders and cancellations. Some empirical observations of the role of the queue size can be observed in \cite{Huang2015} as well as in \cite{MuniTokeYoshida2017}. A basic equilibrium argument suggests that when a queue size at a given price is small, then providing liquidity with limit orders is interesting as it secures a relative priority for the trader. On the contrary, when a queue size is large, then cancellations should be more frequent than limit orders. Furthermore, such an argument should be "more" valid inside the book, where only limit and market orders are allowed, than closer to the best quotes, where market orders remove important portions of liquidity and other trading signals, such as the spread and the imbalance previously investigated, play a significant role.

We investigate these insights by building one simple ratio model for limit orders and cancellations occurring inside the book, from level 2 to 10. Level 1 (best quote) is left aside as its dynamics also involves market orders. For each level $\alpha\in\{2,\ldots,10\}$, let $N^{L_\alpha}$ and $N^{C_\alpha}$ be the counting processes of limits orders and cancellations occurring at this level, with intensities:
\begin{align}
	\lambda^{L_\alpha}(t,\theta^{L_\alpha}) = \lambda_0(t) \exp\left[\theta^{L_\alpha}_0+\theta^{L_\alpha}_1 \log q_\alpha(t) +\theta^{L_\alpha}_2 (\log q_\alpha(t))^2 \right],
	\nonumber	\\ 
	\lambda^{C_\alpha}(t,\theta^{C_\alpha}) = \lambda_0(t) \exp\left[\theta^{C_\alpha}_0+\theta^{C_\alpha}_1 \log q_\alpha(t) +\theta^{C_\alpha}_2 (\log q_\alpha(t))^2 \right].
	\label{equation:Model-LalphaCalpha-qalpha}
\end{align}
where $\lambda_0(t)$ is a potentially random baseline intensity and $q_\alpha(t)$ is the volume standing in the book at the level of submission in group $\alpha$ and time $t$. $q_\alpha(t)$ is expressed in integer multiple of the median trade size (see e.g., \cite{Huang2015,MuniTokeYoshida2017} about this normalization).
Let $\theta^\alpha_j=\theta^{L_\alpha}_j-\theta^{C_\alpha}_j$, $j=0,2$, $\alpha\in\{2,\ldots,10\}$, be the relative parameters of the ratio model, to be estimated by the maximization of quasi-log likelihood.

As in the previous examples, we estimate the model for each stock and each month, and then compute the intensities ratios to estimate the probabilities of each event given the observed volume of the level of submission. Each monthly fit gives 9 fits, one for each level, but for the sake of readability, we plot only three levels: level 2 at the head of the book, level 5 at mid-depth and level 8 at the back of the book. Figure \ref{figure:ProbabilityQalpha} plots the probabilities of limit orders occurring inside the book at these three levels, given the observed volume of the level, and compare them to the empirical probabilities computed on the sample. We show four examples of fits (again selected by computing the quantiles of the mean $L^2$ error of the fits), but all monthly fits are quite similar and available.
\begin{figure}
\begin{center}
\begin{tabular}{cc}
\includegraphics[scale=0.42, page=4]{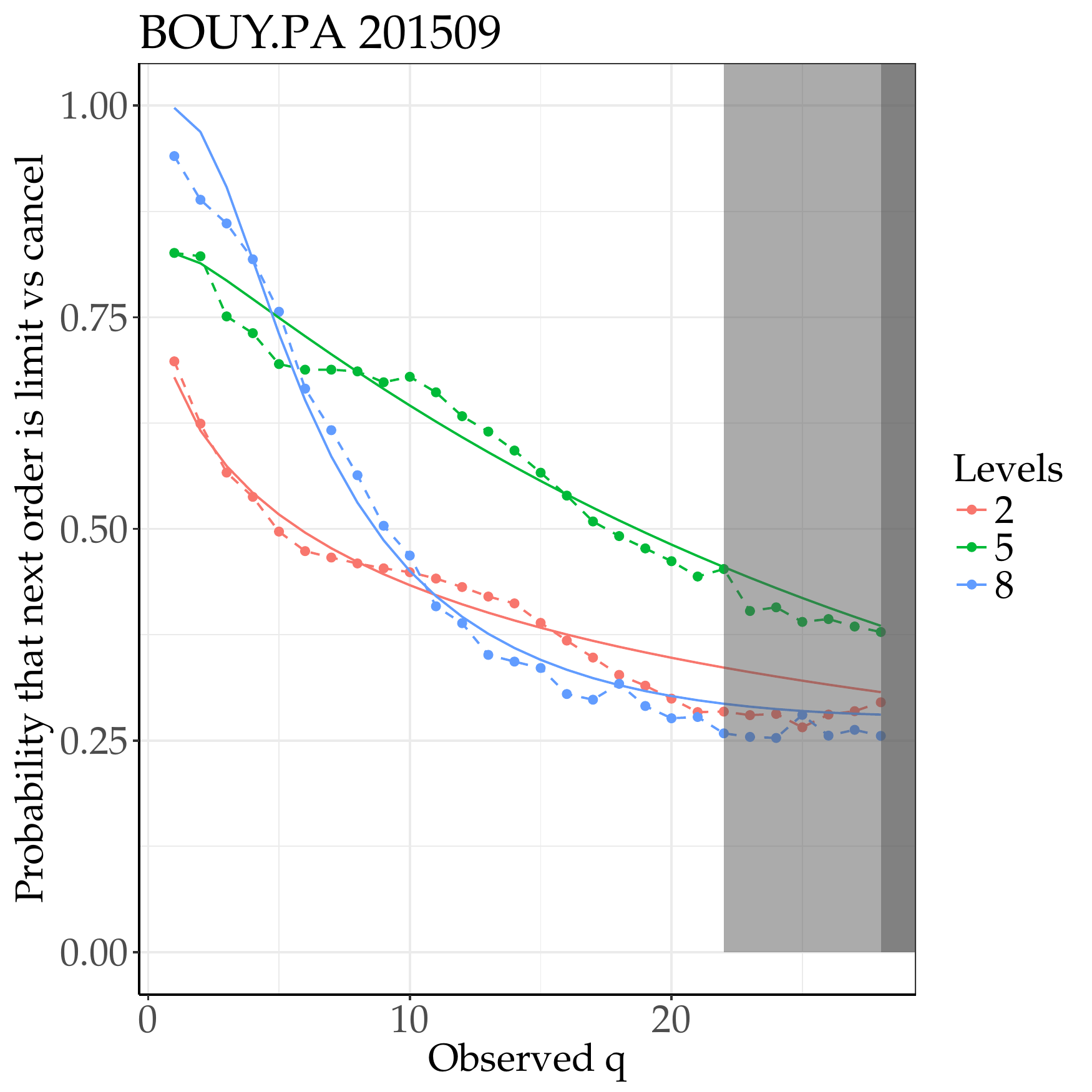}
&
\includegraphics[scale=0.42, page=3]{{Cox-LalphaCalpha-qalpha-MONTHLY-SELECTED}.pdf}
\\
\includegraphics[scale=0.42, page=2]{{Cox-LalphaCalpha-qalpha-MONTHLY-SELECTED}.pdf}
&
\includegraphics[scale=0.42, page=1]{{Cox-LalphaCalpha-qalpha-MONTHLY-SELECTED}.pdf}
\end{tabular}
\caption{Empirical probability (dots and dashed lines) and fitted probability (lines) that the next order is a limit order, given the observed level of the queue of submission. $x$-axis spans more than $99\%$ of the empirical volume distribution (at all levels), and the shaded area on the right signal the $95\%$ and $99\%$ quantiles of this distribution.}
\label{figure:ProbabilityQalpha}
\end{center}
\end{figure}
It is interesting that the probabilities of occurrence of a limit order as a function of the observed volume show similar patterns across stocks. The model obviously recovers the expected equilibrium property that decreases the interest (hence the probability) of a limit order when liquidity is already here (i.e. $q_{\alpha}$ is large). Furthermore, the use of a common $x$-axis makes the the probability curves reflect the general shape of a limit order book. Using a rule of thumb that the equilibrium size of each level should be roughly at the crossing of the $.5$ probability line, the respective position of the three probability curves reflects the well-known humped shaped of the limit order book (\cite{Bouchaud2002}): the mid-book (level 5) is fatter than the head (level 2) and tail (level 8) of the book.

\section{Empirical results : Prediction}
\label{section:EmpiricalResultsPrediction}

Empirical results from Section \ref{section:EmpiricalResults} are in-sample analysis aimed at providing new descriptions of some dependencies observed on financial markets. We now turn to the possible use of the ratio model as a prediction tool.
We consider the problem of the sign of the next trade, previously introduced in Section \ref{subsec:MBMAImbalance}. The ratio model has helped us model how this sign depends on the state of the order book, through e.g., the imbalance, the spread or the last trade sign.
Time series of trade signs are known to be highly auto-correlated, and to be well described by Hawkes processes. In this setting, market orders can be described by a two-dimensional Hawkes process $Z=(Z^B, Z^A)$ with exponential kernels, its conditional intensity $\lambda_Z=(\lambda_{Z^B}, \lambda_{Z^A})$ being written in vector notations as
\begin{equation}
	\lambda_{Z}(t) = \mu_Z + \int_0^t K(t-s)\,dZ_s,
\label{eq:Prediction-HawkesIntensity}
\end{equation}
where $\mu_Z=(\mu_{Z^B}, \mu_{Z^A})$ is the baseline intensity and the kernel matrix 
\begin{equation}
	K(t) = \left(\begin{matrix}
		\alpha_{BB} e^{-\beta_{BB} t} & \alpha_{BA} e^{-\beta_{BA} t}
		\\
		\alpha_{AB} e^{-\beta_{AB} t} & \alpha_{AA} e^{-\beta_{AA} t}
		\end{matrix}\right)
\label{eq:Prediction-HawkesKernel}
\end{equation}
describes the self- and cross- excitation parts.

A simple ratio model with only current observations of the state and not taking into account the history of the order flow (except for the last trade sign) can probably not compete with the Hawkes description.
However, we can incorporate some history into covariates of the ratio model. One may for example consider the following covariates:
\begin{align}
	H_A(t) & = \log\left(\mu_A + \int_0^t \alpha_A e^{-\beta_A(t-s)}\,dN^A_s\right),
	\\
	H_B(t) & = \log\left(\mu_B + \int_0^t \alpha_B e^{-\beta_B(t-s)}\,dN^B_s\right),
\end{align}
where $(N^B, N^A)$ counts the number of bid and ask market orders. These covariates add some self-exciting history into the ratio model.

We can thus examine different methods to predict the trade sign, given that one trade is observed a given time. This is of course a theoretical exercise, as we predict the trade sign with all the information available just before its occurrence, not taking into account latency, information delays, reaction times, etc., that would hinder the performances in a more realistic setting. We test seven methods for this exercise:
\begin{itemize}
	\item Last : the trade sign is set to be the same as the last observed trade sign ;
	\item Imbalance : the trade sign is set to $-1$ if the imbalance observed before its submission is negative, $+1$ otherwise;
	\item Hawkes Full : at each time we compute the intensity $\lambda_Z$ of the Hawkes process described at Equations \eqref{eq:Prediction-HawkesIntensity}-\eqref{eq:Prediction-HawkesKernel}, given the observed history, and set the sign to be $-1$ if $\lambda_{Z^B}>\lambda_{Z^A}$, $+1$ otherwise.
	\item Hawkes NoCross : same as the previous method, except that we independently fit two Hawkes processes for the bid and ask market orders, i.e. we set $\alpha_{BA}=\alpha_{AB}=0$;
	\item Ratio $(i,\epsilon,\epsilon s)$ : we use the probabilities given by the basic ratio model of Equation \eqref{equation:Model-MBMA-ImbalanceLastSpread} and set the sign to be $-1$ if the probability of an ask market order given by the ratio model is lower than $0.5$, and $+1$ otherwise ;
	\item Ratio $(H_B,H_A)$ : same as the previous method, but we use only the covariates $(H_B, H_A)$ in the ratio model, instead of the covariates describing the state of the book ;
	\item Ratio $(H_B,H_A, i,\epsilon,\epsilon s)$ : same as the previous method, but using both $(H_A, H_B)$ and the covariates $(i,\epsilon,\epsilon s)$ in the ratio model.
\end{itemize}
We test these methods on the sample described at Section \ref{subsec:Data}. ``Last'' and ``Imbalance'' methods do no require any calibration. For the other methods, calibration is carried out on the trading day preceding the day at which prediction performance is evaluated. Models ``Hawkes Full'' and ``Hawkes NoCross'' are fitted with a maximum-likelihood estimation (see e.g. \cite{Ogata1978}, \cite{Ozaki1979} and later \cite{Bowsher2007}, \cite{Bacry2013} or \cite{Pomponio2012} in a financial setting). The parameters $(\mu_A, \alpha_A, \beta_A)$ and $(\mu_B, \alpha_B, \beta_B)$ used to compute the covariates $H_A$ and $H_B$ are obtained by the same method, i.e. the estimated values of the parameters $(\mu_A, \alpha_A, \beta_A)$ used to compute $H_A$ correspond to the estimated values $(\mu_{Z^A}, \alpha_{AA}, \beta_{AA})$ of the ``Hawkes NoCross'' model.
Note that the preceding day in the sample means Friday for Mondays, and possibly some other previous day in case of missing trading days in the data. Recall that this sample represents roughly 54 millions trades.

Results are illustrated in Figure \ref{figure:Prediction-General}.
\begin{figure}
\begin{center}
\includegraphics[width=0.7\textwidth]{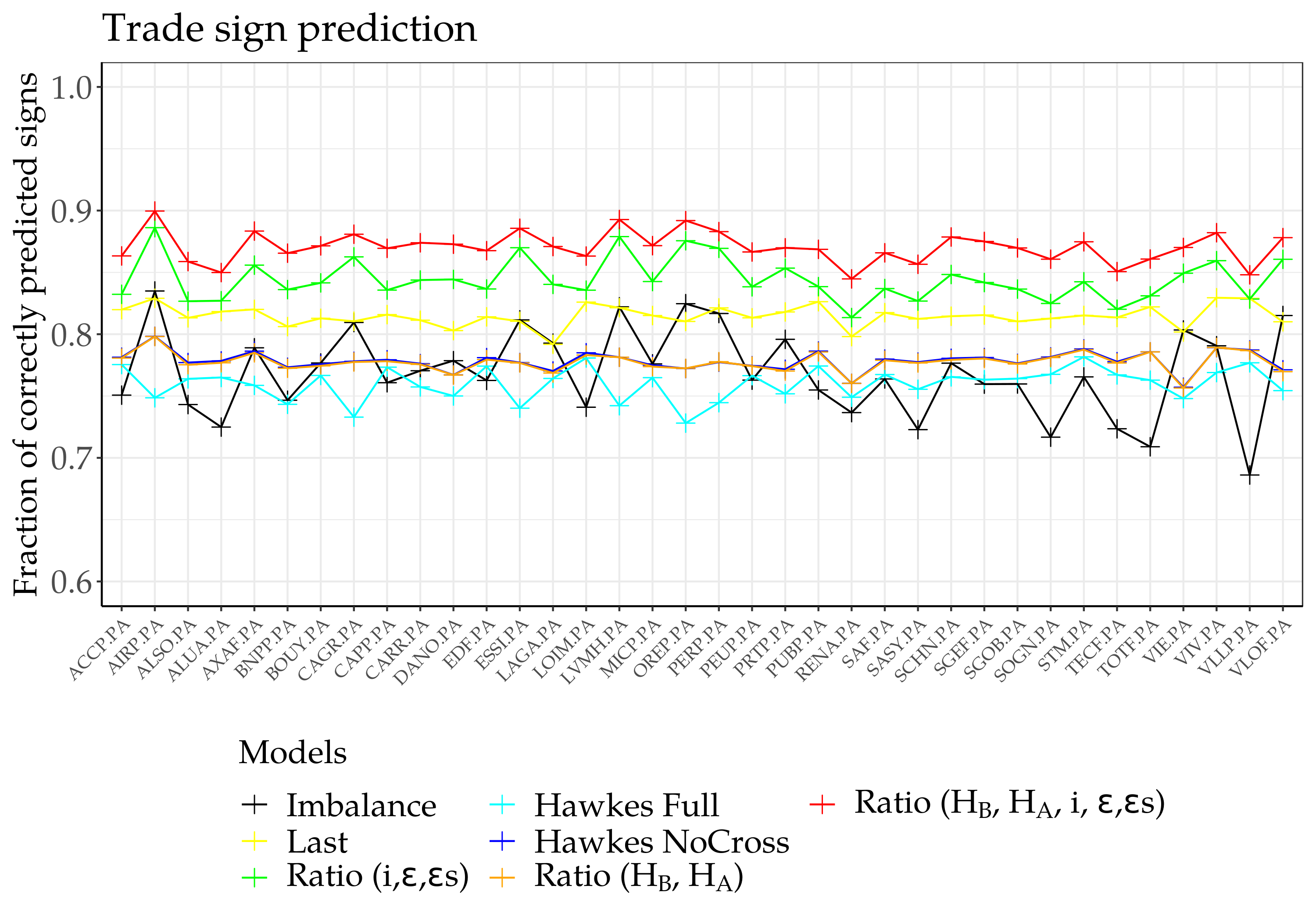}
\caption{Fraction of correctly predicted trade sign per method and per stock, averaged on all available trading days.}
\label{figure:Prediction-General}
\end{center}
\end{figure}
For all stocks, the ``Last'' indicator correctly predicts more than $80\%$ of the trade signs, e.g. less than one trade out of 5 has a sign different from the preceding trade. The ``Imbalance'' indicator is less performant, signing correctly about $70\%-80\%$ of the trades. The model ``Ratio $(i,\epsilon,\epsilon s)$'' of Section \ref{subsec:MBMAImbalance} builds on these two indicators to improve the signing performance to about $85\%$. Hawkes models perform as the ``Imbalance'': the performance of the ``Hawkes Full'' method lies in the range $73\%-78\%$, less than the ``Hawkes NoCross'' model that actually does better in this prediction exercise, always around $75\%-80\%$. This result is an illustration of the observation that parcimony is often crucial for prediction.
Now, it is interesting to observe that the ratio model is actually able to match and improve the Hawkes performances. Using only the covariates $(H_B,H_A)$, the ``Ratio $(H_B,H_A)$'' method actually mimicks the Hawkes process, the two performances curves being very close. The advantage of the ratio model is the possibility to consider both the history covariates $(H_B,H_A)$ and the state covariates $(i,\epsilon,\epsilon s)$. The model ``Ratio $(H_B,H_A,i,\epsilon,\epsilon s)$'' turns out to deliver the best prediction of the trade sign, improving the performance of the ``Ratio $(i,\epsilon,\epsilon s)$'' model by a bit more than $2.5\%$ in average on all the stock-days.

We can analyse a bit further these performances. We now focus on the prediction of a sign change, i.e. we compute the performance using only the trades that have a sign different from the previous trade (roughly $20\%$ of the sample, given the observation above). Results are averaged for each stock on Figure \ref{figure:Prediction-SignChange}.
\begin{figure}
\begin{center}
\includegraphics[width=0.7\textwidth]{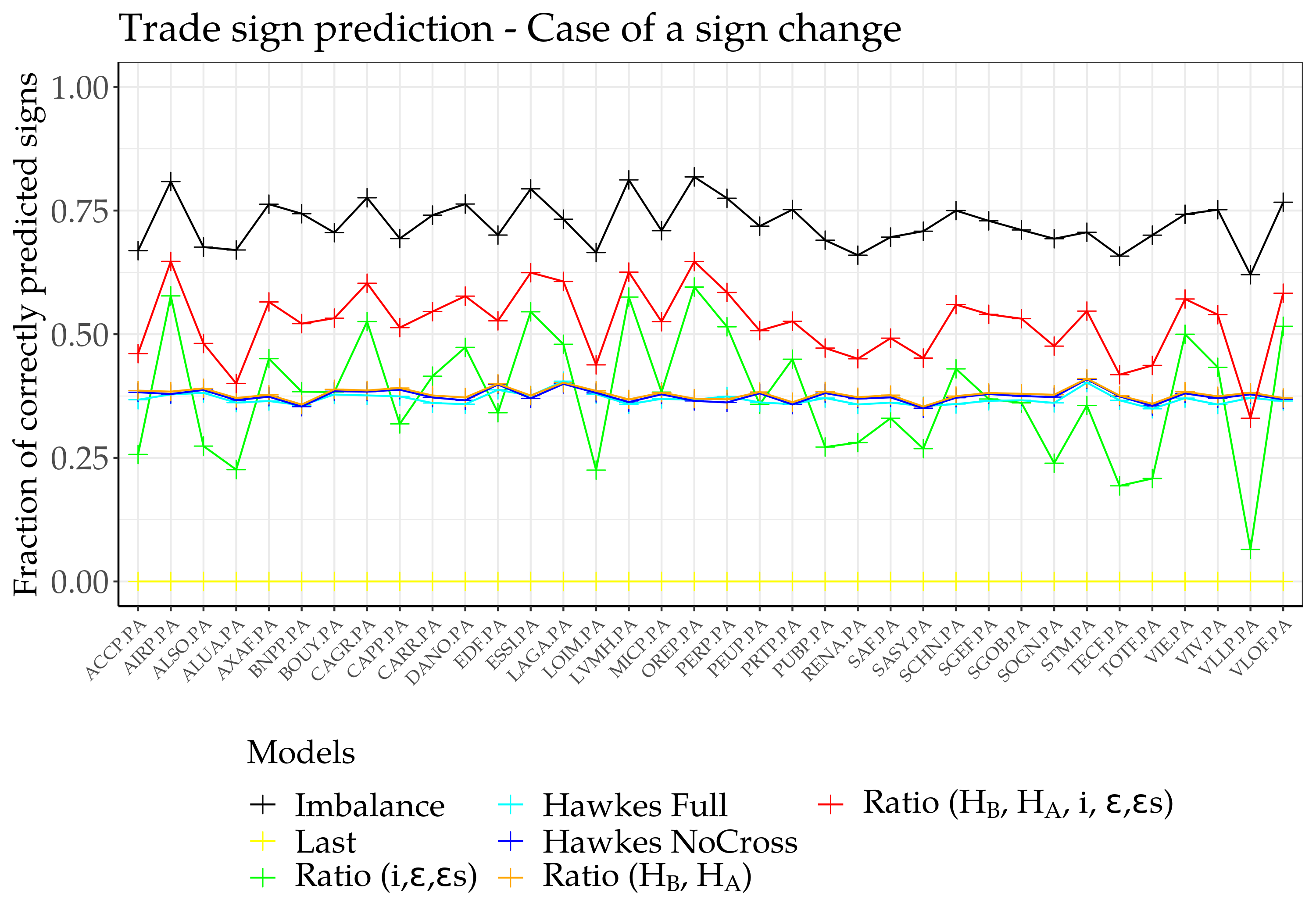}
\caption{Fraction of correctly predicted trade sign per method and per stock, averaged on all available trading days.}
\label{figure:Prediction-SignChange}
\end{center}
\end{figure}
The performance of the ``Imbalance'' signal is roughly similar. The ``Last'' indicator has obviously a performance equal to $0$, as it never predicts a sign change. The ratio model without any history also performs poorly (around $40\%$) in this specific subsample, since its only history relies on the last trade sign. Both Hawkes model are also less performant, always even worse than the basic ``Imbalance'' signal. The interesting point is that in this specific subsample, the combination of the Hawkes covariates and the state covariates $(i,\epsilon,\epsilon s)$ described in Section \ref{subsec:MBMAImbalance} significantly improves the Hawkes models and the ``Ratio $(i,\epsilon,\epsilon s)$'' model. The average performance increase on all the stock-days of the sub-sample for the ``Ratio $(H_B,H_A,i,\epsilon,\epsilon s)$'' model with respect to the best Hawkes model is more than $13\%$. In brief, trade signs are quite well represented by Hawkes processes, but the addition of a state-dependency by means of a ratio model allows a better tracking of the sign changes, explaining an overall better performance.

\section{Conclusion}

We have presented a model based on ratios of Cox-type intensities sharing a common, possibly random, baseline intensity. Consistency and asymptotic normality of the estimators have been proved. Such a model may be very useful in cases where one tries to investigate the role of given covariates on point processes in a fluctuating environment, but in which (some of) these fluctuations are assumed to equally influence all the processes under scrutiny. This is therefore a plausible framework in finance, where global market activity varies wildly during the trading day. The proposed setting removes this common baseline intensity from the estimation procedure, allowing to focus on the role of covariates. Using this method we have in particular been able to highlight how signals such as market imbalance, bid-ask spread and sizes of limit order book queues do influence trading activity. This framework may hopefully be helpful in many other studies in high-frequency finance. Several directions of research and applications may indeed be followed, among which model selection and the identification of significant trading signals.

\section*{Acknowledgements}
This work was in part supported by Japan Science and Technology Agency CREST JPMJCR14D7; Japan Society for the Promotion of Science Grants-in-Aid for Scientific Research No. 17H01702 (Scientific Research), No. 26540011 (Challenging Exploratory Research); and by a Cooperative Research Program of the Institute of Statistical Mathematics. 

\appendix

\section{Mathematical proofs}

\subsection{Proof of Theorem \ref{170925-2}}
We have 
\begin{equation} 
r^i(t,\theta) 
= 
\rho^i(\mathbb X(t),\theta)\qquad(i\in\mathbb I,\>\theta\in\Theta)
\end{equation}
and 
\begin{equation} 
\lambda^i(t,\vartheta^*)
=
\rho^i(\mathbb X(t),\theta^*)\Lambda(\lambda_0(t),\mathbb X(t)).
\qquad(i\in\mathbb I)
\end{equation}

Simple calculus yields 
\begin{equation}
\partial_{\theta^\alpha}\log \rho^i(x,\theta)
=
\big\{\delta_{\alpha,i}-\rho^\alpha(x,\theta)\big\}x \qquad(i\in\mathbb I,\>\alpha\in\mathbb I_0,\>\theta\in\mathbb R^\mathsf p)
\end{equation}
and 
\begin{equation} 
\partial_{\theta^{\alpha'}}\partial_{\theta^\alpha}\log \rho^i(x,\theta)
=
-\big\{\delta_{\alpha',\alpha}\rho^\alpha(x,\theta)-\rho^{\alpha'}(x,\theta)\rho^\alpha(x,\theta)\big\}x^{\otimes2}\qquad(i\in\mathbb I;\>\alpha',\alpha\in\mathbb I_0,\>\theta\in\mathbb R^\mathsf p)
\end{equation}
where $\theta^\alpha=(\theta^\alpha_j)_{j\in\mathbb J}$. 
In other words, 
\begin{equation} 
\partial_{\theta^{\alpha'}}\partial_{\theta^\alpha}\log \rho^i(x,\theta)
=
-{\sf V}(x,\theta)_{\alpha',\alpha}x^{\otimes2}\qquad(i\in\mathbb I;\>\alpha',\alpha\in\mathbb I_0,\>\theta\in\mathbb R^\mathsf p).
\end{equation}

For the closed convex hull $\overline{\mathcal C}[\Theta]$ of $\Theta$, let 
\begin{equation} 
\overline{\mathbb Y}(\theta) 
= 
E\bigg[\sum_{i\in\mathbb I}\log\frac{\rho^i(\mathbb X(0),\theta)}{\rho^i(\mathbb X(0),\theta^*)}\ {\rho^i(\mathbb X(0),\theta^*)}\Lambda(\lambda_0(0),\mathbb X(0))
\bigg]\qquad(\theta\in\overline{\mathcal C}[\Theta])
\end{equation}
Then 
\begin{equation} 
\overline{\mathbb Y}(\theta) 
\leq
0\qquad(\theta\in\overline{\mathcal C}[\Theta])
\end{equation}
and the equality holds if and only if 
\begin{equation}\label{170824-1} 
\rho^i(\mathbb X(0),\theta) = \rho^i(\mathbb X(0),\theta^*)\qquad \lambda_0(0)dP\text{-}a.e.\quad(\forall i\in\mathbb I).
\end{equation}
Condition (\ref{170824-1}) implies that 
\begin{equation} 
\exp\big(\mathbb X(0)[\theta^i]-\mathbb X(0)[\theta^{*i}]\big)
= 
\exp\big(\mathbb X(0)[\theta^0]-\mathbb X(0)[\theta^{*0}]\big)
\\= 1 
\qquad \lambda_0(0)dP\text{-}a.e.\quad(\forall i\in\mathbb I)
\end{equation}
due to $\theta^0=\theta^{*0}=0$. 
Therefore, 
\begin{equation} 
E\bigg[{\sf V}_0(\mathbb X(0))_{i,i}\big(\mathbb X(0)\big[\theta^i-\theta^{*i}\big]\big)^2
\Lambda(\lambda_0(0),\mathbb X(0))
\bigg]
=0\quad(\forall i\in\mathbb I_0)
\end{equation}
and hence $\theta^i=\theta^{*i}$ for all $i\in\mathbb I_0$ due to Condition [A3] 
applied to $u=\big(\delta_{i,i'}(\theta^i_j-\theta^{*i}_j)\big)_{i'\in\mathbb I_0,j\in\mathbb J}$. 

We see $\Gamma=-\partial_\theta^2\overline{\mathbb Y}(\theta^*)$.  
Since $\Gamma$ is positive definite and $\overline{\mathbb Y}(\theta)\not=0$ 
for all $\theta\in\overline{\Theta}\setminus\{\theta^*\}$, 
there exists a constant $\chi_0>0$ such that 
\begin{equation}\label{290925-7} 
\overline{\mathbb Y}(\theta) = \overline{\mathbb Y}(\theta)-\overline{\mathbb Y}(\theta^*) \leq -\chi_0\big|\theta-\theta^*\big|^2
\end{equation}
for all $\theta\in\overline{\mathcal C}[\Theta]$. 

Let 
$\mathbb U_T=\{u\in\mathbb R^\mathsf p;\>\theta^\dagger_T(u)\in\Theta\}$, where $\theta^\dagger_T(u)=\theta^*+T^{-1/2}u$.  
The quasi-likelihood ratio random field is define by 
\begin{equation} 
\mathbb Z_T(u) 
=
\exp\big(\mathbb H_T(\theta^*+T^{-1/2}u)-\mathbb H_T(\theta^*)\big)\quad(u\in\mathbb U_T).
\end{equation}
We will use also 
\begin{equation} 
\overline{\mathbb Z}_T(u) = \exp\big(\mathbb H_T(\theta^*+T^{-1/2}u)-\mathbb H_T(\theta^*)\big)\quad(u\in\mathcal C[\mathbb U_T]).
\end{equation}
For this extension, we notice $r^i(t,\theta)$ and hence $\mathbb H_T$ is 
naturally extended to $\mathcal C[\mathbb U_T]$. 
The random field $\overline{\mathbb Z}_T$ will be used to obtain the so-called 
polynomial type large deviation inequality for $\mathbb Z_T$. 

Define $\overline{\mathbb Y}_T$ by 
\begin{equation}
\overline{\mathbb Y}_T(\theta) = T^{-1}\big(\mathbb H_T(\theta)-\mathbb H_T(\theta^*)\big)
\qquad(\theta\in\mathcal C[\Theta])
\end{equation}
with the naturally extended $\mathbb H_T$. 
Define a $\mathsf p$-dimensional random variable $\Delta_T$ and a $\mathsf p\times\mathsf p$ random matrix $\Gamma_T$ by 
\begin{equation} 
\Delta_T = T^{-1/2}\partial_\theta\mathbb H_T(\theta^*)
\end{equation}
and 
\begin{equation} 
\Gamma_T= -T^{-1}\partial_\theta^2\mathbb H_T(\theta^*)
\end{equation}
respectively. 
Let $\Gamma_T(\theta)=-T^{-1}\partial_\theta^2\mathbb H_T(\theta)$ 
for $\theta\in\mathcal C[\Theta]$. 
Then 
\begin{equation}\label{171022-1} 
\Gamma_T(\theta) [u^{\otimes2}]
= 
\frac{1}{T}\int_0^T \bigg({\sf V}_0(\mathbb X(t),\theta)\otimes\mathbb X(t)^{\otimes2}\bigg)
[u^{\otimes2}]\sum_{i\in\mathbb I}dN^i_t.
\end{equation}

Take parameters $\alpha$, $\beta_1$, $\rho_1$ and $\rho_2$
so that 
\begin{equation} 
0<\beta_1<\frac{1}{2},\quad 0<\rho_1<\min\bigg\{1,\beta,\frac{2\beta_1}{1-\alpha}\bigg\},
\quad
0<2\alpha<\rho_2,
\quad 1-\rho_2>0,
\label{201903091202}
\end{equation}
where $\beta=\alpha/(1-\alpha)$. 
We have 
\begin{lemma}\label{170926-1} 
Suppose that $[A1]$ and $[A2]$ are satisfied. 
Let $p$ be any positive number. Then 
\begin{description}
\item[(i)] $  \sup_{T>1}\big\|\Delta_T\big\|_p<\infty$.
\item[(ii)] $  \sup_{T>1}\bigg\|\sup_{\theta\in\mathcal C[\Theta]}T^{\frac{1}{2}}
\big(\overline{\mathbb Y}_T(\theta)-\overline{\mathbb Y}(\theta)\big)\bigg\|_p<\infty$. 
\item[(iii)] 
$  \sup_{T>1}\bigg\|T^{-1}\sup_{\theta\in\mathcal C[\Theta]}|\partial_\theta^3\mathbb H_T(\theta)\big|
\bigg\|_p<\infty$. 
\item[(iv)] 
$  \sup_{T>1}E\big\|T^{\beta_1}|\Gamma_T-\Gamma|\big\|_p<\infty$. 
\end{description}
\end{lemma}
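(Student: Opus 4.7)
The plan is to split each functional of the $N^i$ appearing in Lemma \ref{170926-1} into a martingale part and a Lebesgue part using the compensator $\lambda^i(t,\vartheta^*)\,dt=\rho^i(\mathbb X(t),\theta^*)\Lambda(\lambda_0(t),\mathbb X(t))\,dt$: write $dN^i_t=d\tilde N^i_t+\lambda^i(t,\vartheta^*)\,dt$ with orthogonal purely discontinuous local martingales $\tilde N^i$. The martingale pieces will be controlled by the Burkholder--Davis--Gundy inequality, and the Lebesgue pieces by stationary $\alpha$-mixing moment inequalities that follow from $[A1]$ and $[A2]$. The key auxiliary fact is that the derivatives of $\log r^i$ admit the $\theta$-uniform pointwise bound $|\partial^k_\theta\log r^i(t,\theta)|\leq C_k|\mathbb X(t)|^k$ for $k=1,2,3$ on $\theta\in\mathcal C[\Theta]$, since the $\rho^\alpha$ lie in $[0,1]$; together with the fact that $|\mathbb X(0)|^q\Lambda(\lambda_0(0),\mathbb X(0))\in L_{\infty-}$ under $[A1]$, this controls all envelopes once and for all.

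\textbf{Pointwise items (i), (iii), (iv).} For (i), one first checks that the identity $\sum_i(\delta_{\alpha,i}-\rho^\alpha(\mathbb X(t),\theta^*))\rho^i(\mathbb X(t),\theta^*)=0$ kills the Lebesgue part of $\partial_\theta\mathbb H_T(\theta^*)$, so that $\Delta_T$ reduces to a sum of stochastic integrals against the $\tilde N^i$; BDG together with the stationary $L_{p/2}$ bound on the integrand $|\mathbb X(0)|^2\Lambda(\lambda_0(0),\mathbb X(0))$ yields $\sup_T\|\Delta_T\|_p<\infty$. Item (iii) comes for free from the uniform third-derivative bound: it reduces to the $L^p$ analysis of $T^{-1}\sum_i\int_0^T|\mathbb X(t)|^3\,dN^i_t$, whose martingale part is $O(T^{-1/2})$ in $L^p$ by BDG and whose Lebesgue part is dominated in $L^p$ by $\||\mathbb X(0)|^3\Lambda(\lambda_0(0),\mathbb X(0))\|_p$ via stationarity. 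For (iv), the same decomposition yields an $O(T^{-1/2})$ martingale contribution to $\Gamma_T-\Gamma$ (BDG) plus a centered Lebesgue contribution $T^{-1}\int_0^T f(\lambda_0(t),\mathbb X(t))\,dt-Ef(\lambda_0(0),\mathbb X(0))$ with $f=({\sf V}_0(\mathbb X)\otimes\mathbb X^{\otimes2})\Lambda(\lambda_0,\mathbb X)$; the rapid decay of $\alpha$ in $[A2]$ together with the $L_{\infty-}$ envelope from $[A1]$ gives this term an $L^p$-norm of order $T^{-1/2}$ via a standard block covariance inequality, so multiplying by $T^{\beta_1}$ with $\beta_1<1/2$ preserves uniform boundedness in $T$.

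\textbf{Uniform item (ii) and main obstacle.} For (ii), let $G_T(\theta)=T^{1/2}(\overline{\mathbb Y}_T(\theta)-\overline{\mathbb Y}(\theta))$ and repeat the decomposition inside the integrand $\log(r^i(t,\theta)/r^i(t,\theta^*))$; the same BDG plus mixing arguments yield a pointwise $L^p$-bound on $G_T(\theta)$ at each fixed $\theta$. To upgrade to $\sup_\theta|G_T(\theta)|$, the plan is to use Sobolev embedding on the bounded open set $\mathcal C[\Theta]\subset\mathbb R^\mathsf p$: for $p>\mathsf p$,
\begin{equation}
\Big\|\sup_{\theta\in\mathcal C[\Theta]}|G_T(\theta)|\Big\|_p \;\lesssim\; \sup_{\theta}\|G_T(\theta)\|_p+\sup_{\theta}\|\partial_\theta G_T(\theta)\|_p,
\end{equation}
so the job reduces to running the same pointwise analysis on $\partial_\theta G_T(\theta)$, whose integrands remain dominated by envelopes $C(|\mathbb X|+|\mathbb X|^2)$. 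The main obstacle that I expect is ensuring that the mixing-based $L^p$ bounds on the Lebesgue pieces are truly uniform in $\theta$: this requires a single, $\theta$-independent envelope for the integrands and their $\theta$-derivatives, so that the stationary covariance sum converges at one rate across all $\theta\in\mathcal C[\Theta]$. The envelopes produced by the $[0,1]$-valuedness of $\rho^\alpha$ and the $L_{\infty-}$ control of $|\mathbb X|^q\Lambda$ under $[A1]$ are exactly what is needed, and combining the four estimates then completes Lemma \ref{170926-1}.
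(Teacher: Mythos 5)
Your proposal is correct and follows essentially the same route as the paper: the martingale/compensator decomposition with Burkholder--Davis--Gundy for the $\tilde N^i$-integrals, a stationary $\alpha$-mixing moment inequality (the paper invokes Theorem 6.3 of Rio, 2017) for the centered Lebesgue parts, the $[0,1]$-boundedness of the $\rho^\alpha$ to get $\theta$-uniform polynomial envelopes in $|\mathbb X(t)|$, and Sobolev embedding $W^{1,p}(\mathcal C[\Theta])\hookrightarrow C(\mathcal C[\Theta])$ to pass to the supremum in (ii). The only detail you gloss over is that after BDG the quadratic-variation term $T^{-1}\int_0^T h(\mathbb X(t))^2\,dN^i_t$ must itself be re-split into martingale plus compensator and handled by induction on the moment order, as the paper does.
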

\begin{proof} 
Let $\rho(x,\theta)=\big(\rho^i(x,\theta)\big)_{i\in\mathbb I_0}$. 
Let $e_i=(\delta_{\alpha,i})_{\alpha\in\mathbb I_0}$ for $i\in\mathbb I$. 
Define $g_i\in\mathbb R^{\overline{i}}\otimes\mathbb R^{\overline{j}}$ $(i\in\mathbb I)$ by 
\begin{equation} 
g_i(x,\theta) = \big(1_{\{i\not=0\}}e_i-\rho(x,\theta)\big)\otimes x
\qquad (i\in\mathbb I)
\end{equation}
Then 
\begin{equation} 
\Delta_T = T^{-1/2}\sum_{i\in\mathbb I}\int_0^Tg_i(\mathbb X(t),\theta^*)d\tilde{N}^i_t,
\label{201903091157}
\end{equation}
where 
\begin{equation} 
\tilde{N}^i_t = N^i_t-\int_0^t \lambda^i(s,\vartheta^*)ds\qquad(i\in\mathbb I)
\end{equation}

We have 
\begin{equation} 
\lambda^i(t,\theta) \leq \lambda_0(t)\sum_{i\in\mathbb I}\exp\big(\mathbb X(t)[\vartheta^{*i}]\big),
\\
|g_i(\mathbb X(t),\theta)| \leq 2|\mathbb X(t)|
\end{equation}
and 
\begin{equation} 
|\log r^i(t,\theta)| = |\log \rho^i(\mathbb X(t),\theta)| \leq C_{\overline{i}}\>(1+|\mathbb X(t)||\theta|)
\end{equation}
for some constant $C_{\overline{i}}$ depending on $\overline{i}$. 

By the Burkholder-Davis-Gundy inequality, we obtain 
\begin{align} 
E\bigg[\bigg|T^{-1/2}\int_0^T h(\mathbb X(t))d\tilde{N}^i_t\bigg|^{2k}\bigg]
& \leq
C_kE\bigg[\bigg|T^{-1}\int_0^T h(\mathbb X(t))^2dN^i_t\bigg|^{k}\bigg]
\nonumber \\ & \leq
C_kE\bigg[\bigg|T^{-1}\int_0^T h(\mathbb X(t))^2d\tilde{N}^i_t\bigg|^{k}\bigg]
\nonumber \\ & \quad
+ C_kE\bigg[T^{-1}\int_0^T \big|h(\mathbb X(t))\big|^{2k}
\lambda^i(t,\vartheta^*)^k dt \bigg]
\nonumber \\ & \leq
C_kE\bigg[\bigg|T^{-1}\int_0^T h(\mathbb X(t))^2d\tilde{N}^i_t\bigg|^{k}\bigg]
\nonumber \\ & \quad
+ C_kE\bigg[\big|h(\mathbb X(0))\big|^{2k}
\lambda^i(0,\vartheta^*)^k  \bigg]
\end{align}
for any $k\in\mathbb N$ and any measurable function $h$ of at most polynomial growth. 
By Equation \eqref{201903091157} and induction, we obtain (i). 

Set 
\begin{equation} 
M_T(\theta) = 
\sum_{i\in\mathbb I}T^{-1}\int_0^T \log\frac{\rho^i(\mathbb X(t),\theta)}{\rho^i(\mathbb X(t),\theta^*)}d\tilde{N}^i_t
\end{equation}
and 
\begin{equation}
K_T(\theta) = T^{-1}\int_0^T f(\lambda_0(t),\mathbb X(t),\theta)dt,
\end{equation}
where 
\begin{align} 
f(w,x,\theta) & = 
\sum_{i\in\mathbb I}\log \frac{\rho^i(x,\theta)}{\rho^i(x,\theta^*)}\rho^i(x,\theta^*)
\Lambda(w,x)
\nonumber \\ & \quad
-E\bigg[\sum_{i\in\mathbb I}\log \frac{\rho^i(\mathbb X(0),\theta)}{\rho^i(\mathbb X(0),\theta^*)}
\rho^i(\mathbb X(0),\theta^*)\Lambda(\lambda_0(0),\mathbb X(0))\bigg]. 
\end{align}
Then 
\begin{equation} 
\overline{\mathbb Y}_T(\theta)-\overline{\mathbb Y}(\theta)
=
M_T(\theta)+K_T(\theta). 
\end{equation}
Similarly to the proof of (i), we obtain 
\begin{equation} 
\sum_{k=0,1}\sup_{\theta\in\mathcal C[\Theta]}\sup_{T>1}\|T^{1/2}\partial_\theta^kM_T(\theta)\|_p <\infty
\end{equation}
for every $p\geq2$. 
Moreover, applying Theorem 6.3 of \cite{Rio2017} under $[A1]$ and $[A2]$, 
we obtain 
\begin{equation} 
\sum_{k=0,1}\sup_{\theta\in\mathcal C[\Theta]}\sup_{T>1}\|T^{1/2}\partial_\theta^kK_T(\theta)\|_p <\infty
\end{equation}
for every $p\geq2$. Now Sobolev's embedding inequality in $W^{1,p}(\mathcal C[\Theta]) \hookrightarrow C(\mathcal C[\Theta])$ for $p>\mathsf p\vee2$ 
gives (ii). 
In a similar fashion, it is possible to prove (iii). 
The proof of (iv) is also similar to that of (ii), and rather simpler. 
\end{proof}

\begin{lemma}\label{170926-2} 
Suppose that $[A1]$-$[A3]$ are satisfied. Then 
\begin{description}\item[(i)] For any $L>0$, 
\begin{equation} 
\sup_{T>1}\sup_{r>0}\>r^L
P\bigg[\sup_{u\in\mathbb V_T(r)}\mathbb Z_T(u)
\geq \exp\big(-2^{-1}r^{2-(\rho_1\vee\rho_2)}\big)\bigg]
<
\infty,
\end{equation}
where $\mathbb V_T(r)=\{u\in\mathbb U_T;\>|u|\geq r\}$. 
\item[(ii)] $\mathbb Z_T$ admits a locally asymptotically normal representation 
\begin{equation}\label{290926-3} 
\mathbb Z_T(u) = 
\exp\bigg(\Delta_T[u]-\frac{1}{2}\Gamma[u^{\otimes2}]+r_T(u)\bigg)
\end{equation}
with $r_T(u)\to^p0$ as $T\to\infty$ for every $u\in\mathbb R^\mathsf p$ 
and $\Delta_T\to^d N_\mathsf p(0,\Gamma)$ as $T\to\infty$. 
\end{description}
\end{lemma}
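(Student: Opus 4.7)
The plan is to establish part (ii) first by a Taylor expansion of $\mathbb H_T$ at $\theta^*$ together with the moment estimates of Lemma \ref{170926-1}, and then to deduce part (i) from the general polynomial type large deviation (PLDI) machinery of the quasi-likelihood analysis framework, anchored by the quadratic separation \eqref{290925-7}. Both statements treat the random field $\mathbb Z_T$ on $\mathbb U_T$, and both rely critically on the uniformity bounds already packaged in Lemma \ref{170926-1}.

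For (ii), the Taylor formula at $\theta^*$ gives
\begin{equation}
\log \mathbb Z_T(u) = \Delta_T[u] - \tfrac{1}{2}\Gamma_T[u^{\otimes 2}] + \tfrac{1}{6}T^{-3/2}\partial_\theta^3\mathbb H_T(\theta^*+\tau T^{-1/2}u)[u^{\otimes 3}]
\end{equation}
for some $\tau\in[0,1]$. Lemma \ref{170926-1}(iii) makes the cubic residual $O_p(T^{-1/2})|u|^3$, hence $o_p(1)$ pointwise in $u$; Lemma \ref{170926-1}(iv) yields $\Gamma_T[u^{\otimes 2}] = \Gamma[u^{\otimes 2}] + o_p(1)$. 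This delivers representation \eqref{290926-3}. To finish I would show $\Delta_T \to^d N_{\mathsf p}(0,\Gamma)$ by the martingale CLT applied to the representation \eqref{201903091157}: the drift vanishes because $\sum_{i\in\mathbb I} g_i(x,\theta^*)\rho^i(x,\theta^*)=0$ by definition of $g_i$, so $\Delta_T$ is a continuous-time purely discontinuous martingale with predictable quadratic variation
\begin{equation}
\langle \Delta_T\rangle_T = T^{-1}\sum_{i\in\mathbb I}\int_0^T g_i(\mathbb X(t),\theta^*)^{\otimes 2}\lambda^i(t,\vartheta^*)\,dt.
\end{equation}
Under [A1]--[A2] the mixing ergodic theorem (of the type used in the proof of Lemma \ref{170926-1}) yields convergence of this average to $E\bigl[\sum_{i\in\mathbb I}g_i(\mathbb X(0),\theta^*)^{\otimes 2}\rho^i(\mathbb X(0),\theta^*)\Lambda(\lambda_0(0),\mathbb X(0))\bigr]$, which equals $\Gamma$ via the multinomial covariance identity $\sum_{i\in\mathbb I}(\mathbf 1_{\{i\ne 0\}}e_i-\rho)^{\otimes 2}\rho^i = \mathsf V_0$. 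The jumps are $O(T^{-1/2})$, so the Lindeberg condition is automatic.

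For (i), the strategy is to apply the QLA polynomial type large deviation scheme to $\overline{\mathbb Z}_T$ on $\mathcal C[\mathbb U_T]$. The inputs required by the scheme are exactly those guaranteed by the material already built: the quadratic global separation \eqref{290925-7} at the limit level, and the $L^p$ uniformity of $\Delta_T$, of $T^{1/2}(\overline{\mathbb Y}_T-\overline{\mathbb Y})$, of $T^{-1}\sup_\theta|\partial_\theta^3\mathbb H_T|$, and of $T^{\beta_1}|\Gamma_T-\Gamma|$ from Lemma \ref{170926-1}. Concretely I would decompose $\mathbb V_T(r) = \{u\in\mathbb V_T(r):|u|\leq T^{1/2-\beta_1}\}\cup\{u\in\mathbb V_T(r):|u|> T^{1/2-\beta_1}\}$. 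On the inner region a second-order Taylor expansion combined with Lemma \ref{170926-1}(iii)--(iv) bounds $\log\mathbb Z_T(u)$ above by $-c|u|^2+o(|u|^2)$ off an exceptional set of polynomially small probability. On the outer region I would write $\log\mathbb Z_T(u) = T\,\overline{\mathbb Y}(\theta^*+T^{-1/2}u) + T(\overline{\mathbb Y}_T-\overline{\mathbb Y})(\theta^*+T^{-1/2}u)$ and use \eqref{290925-7} to dominate the fluctuation controlled by Lemma \ref{170926-1}(ii); a Markov inequality against arbitrary power of $r$ then gives the polynomial tail.

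The main obstacle is part (i): orchestrating the parameter constraints \eqref{201903091202} so that the inner/outer split closes simultaneously. The inequality $\beta_1<1/2$ is needed for the cubic residual to be negligible uniformly on $|u|\leq T^{1/2-\beta_1}$, while the constraints $\rho_1<2\beta_1/(1-\alpha)$ and $\rho_2>2\alpha$ are what allow the quadratic separation on the outer region to dominate the $L^p$ fluctuation at rate $r^{2-(\rho_1\vee\rho_2)}$ matching the target threshold $\exp(-r^{2-(\rho_1\vee\rho_2)}/2)$. Once this parameter scheme is in place, part (ii) is comparatively routine modulo the ergodic identification of the limit $\Gamma$.
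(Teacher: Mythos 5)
Your part (ii) follows the paper's proof almost exactly: a second-order Taylor expansion whose remainder is controlled by Lemma \ref{170926-1} (iii)--(iv) (the paper writes the remainder as $\int_0^1(1-s)\{\Gamma[u^{\otimes2}]-\Gamma_T(\theta^\dagger_T(su))[u^{\otimes2}]\}ds$, equivalent to your cubic form), followed by the martingale CLT for the representation \eqref{201903091157} with the same multinomial identity $\sum_{i}(\mathbf 1_{\{i\neq0\}}e_i-\rho)^{\otimes2}\rho^i={\sf V}_0$ identifying the bracket with $\Gamma$. Two small imprecisions there: the cross terms in the bracket vanish only because the $N^i$ have no common jumps (worth saying), and the Lindeberg condition is not literally ``automatic'' from jumps of size $O(T^{-1/2})$ since $\mathbb X$ is unbounded --- the paper verifies it using the $L_{\infty-}$ moments of [A1] after splitting $[0,T]$ into $\lceil T\rceil$ blocks.

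For part (i) you diverge from the paper in an interesting but ultimately incomplete way. The paper does not run the inner/outer decomposition by hand; it verifies the hypotheses of Theorem 3(c) of \cite{Yoshida2011} for the extension $\overline{\mathbb Z}_T$ over the \emph{convex hull} $\mathcal C[\mathbb U_T]$: condition $[A1'']$ from Lemma \ref{170926-1}(iii)--(iv), $[A4']$ with $\beta_2=0$ under \eqref{201903091202}, $[A6]$ from Lemma \ref{170926-1}(i)--(ii), and $[B2]$ from \eqref{290925-7}. Your sketch is essentially a re-derivation of that theorem, and you correctly identify all the inputs it consumes; what you leave open --- and explicitly flag as the obstacle --- is exactly the content the citation supplies: the bound must hold for the \emph{supremum} over the annulus $\mathbb V_T(r)$, uniformly in $T$ and $r$, which requires a chaining/Sobolev argument over the (convexified) annuli in addition to the pointwise Taylor and Markov estimates, together with the parameter bookkeeping in \eqref{201903091202}. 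As written, (i) is a plausible plan with an acknowledged gap rather than a proof; closing it either means reproducing the PLDI theorem's proof in full or, as the paper does, simply checking its hypotheses. Note also that the cited theorem delivers the exponent $2-(\rho_1\wedge\rho_2)$, which implies the stated bound with $2-(\rho_1\vee\rho_2)$ for $r\geq1$.
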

\begin{proof} 
We will verify the conditions of 
Theorem 3 (c) of \cite{Yoshida2011} for the naturally extended random field 
$\mathbb H_T$ over $\mathcal C[\Theta]$. 
Condition $[A1'']$ therein holds according to Lemma \ref{170926-1} (iii) and (iv). 
Condition $[A4']$ therein is satisfied with $\beta_2=0$ under Equation \eqref{201903091202}. 
Lemma \ref{170926-1} (i) and (ii) ensures Condition $[A6]$ therein. 
Condition $[B1]$ therein is obvious and 
Condition $[B2]$ therein is (\ref{290925-7}). 
Therefore, by Theorem 3 of \cite{Yoshida2011}, we obtain 
\begin{equation} 
\sup_{T>1}\sup_{r>0}\>r^L
P\bigg[\sup_{u\in\mathcal C[\mathbb U_T]:|u|\geq r}\overline{\mathbb Z}_T(u)
\geq \exp\big(-2^{-1}r^{2-(\rho_1\wedge\rho_2)}\big)\bigg]
<
\infty, 
\end{equation}
which gives (i). 

The term $r_T(u)$ is defined by (\ref{290926-3}) for $u\in\mathbb U_T$. 
For each $u\in\mathbb R^\mathsf p$, for sufficiently large $T$, 
$r_T(u)$ admits the representation 
\begin{equation} 
r_T(u) = 
\int_0^1 (1-s)\bigg\{\Gamma[u^{\otimes2}]-\Gamma_T(\theta^\dagger_T(su))[u^{\otimes2}]
\bigg\}ds.
\end{equation}
Then Lemma \ref{170926-1} (iii) and (iv) verify the convergence $r_T(u)\to^p0$ 
as $T\to\infty$. 

For $u=(u^i_j)_{i\in\mathbb I_0,j\in\mathbb J}$ and $x=(x_j)_{j\in\mathbb J}$, 
\begin{align}
& \; \sum_{i\in\mathbb I}\bigg\{\sum_{i_1\in\mathbb I_0,j_1\in\mathbb J}
\big(\delta_{i,i_1}-\rho^{i_1}(x,\theta)\big)x_{j_1}u^{i_1}_{j_1}\bigg\}^2\rho^i(x,\theta)
\nonumber \\ = & \;
\sum_{\substack{i_1,i_2\in\mathbb I_0\\ j_1,j_2\in\mathbb J}}
\bigg\{\sum_{i\in\mathbb I}\big(\delta_{i,i_1}-\rho^{i_1}(x,\theta)\big)
\big(\delta_{i,i_2}-\rho^{i_2}(x,\theta)\big)\rho^i(x,\theta)\bigg\}
x_{j_1}x_{j_2}u^{i_1}_{j_1}u^{i_2}_{j_2}
\nonumber \\ = & \;
\sum_{\substack{i_1,i_2\in\mathbb I_0\\ j_1,j_2\in\mathbb J}}
\bigg\{\delta_{i_1,i_2}\rho^{i_1}(x,\theta)-\rho^{i_1}(x,\theta)\rho^{i_2}(x,\theta)\bigg\}
x_{j_1}x_{j_2}u^{i_1}_{j_1}u^{i_2}_{j_2}
\nonumber \\ = & \;
\big({\sf V}_0(x,\theta)\otimes x^{\otimes2}\big)[u^{\otimes2}].
\end{align}
Since it is assumed that $N^i$ ($i\in\mathbb I$) do not have common jumps, 
\begin{align}
& \;\bigg\langle T^{-1/2}\sum_{i\in\mathbb I}\int_0^\cdot g_i(\mathbb X(t),\theta^*)[u]
d\tilde{N}^i_t\bigg\rangle_T
\nonumber \\ = & \;
T^{-1}\sum_{i\in\mathbb I}\int_0^T\big(g_i(\mathbb X(t),\theta^*)[u]\big)^2
\rho^i(\mathbb X(t),\theta^*)\Lambda(\lambda_0(t),\mathbb X(t))dt
\nonumber \\ = & \;
T^{-1}\int_0^T\bigg({\sf V}_0(\mathbb X(t),\theta)\otimes \mathbb X(t)^{\otimes2}\bigg)[u^{\otimes2}]
\Lambda(\lambda_0(t),\mathbb X(t))dt. 
\end{align}
Under $[A1]$ and $[A2]$, the term on the right-hand side converges in probability 
to $\Gamma[u^{\otimes2}]$ as $T\to\infty$. 
The conditional type Lindeberg condition is easily verified by dividing the range $[0,T]$ of the integral \eqref{201903091157} into 
$\lceil T\rceil$ subintervals, and as a result, 
we see $\Delta_T\to^dN_\mathsf p(0,\Gamma)$, which concludes the proof of (ii). 
\end{proof}

It is possible to extend $\mathbb Z_T$ to $\mathbb R^\mathsf p$ so that the extension has a compact support and 
\begin{equation}
\sup_{u\in\mathbb R^\mathsf p\setminus\mathbb U_T}\mathbb Z_T(u)\leq\max_{u\in\partial\mathbb U_T}\mathbb Z_T(u).
\end{equation} 
We will denote this extended random field by the same $\mathbb Z_T$. 
Then $\mathbb Z_T$ is a random variable taking values in 
the Banach space 
$\hat{C}=\{f\in C(\mathbb R^\mathsf p);\>\lim_{|u|\to\infty}f(u)=0\}$ equipped with sup-norm. 
On some probability space, we prepare a random field 
\begin{equation} 
\mathbb Z(u) = \exp\bigg(\Delta[u]-\frac{1}{2}\Gamma[u^{\otimes2}]\bigg)
\qquad (u\in\mathbb R^\mathsf p)
\end{equation}
where $\Delta\sim N_\mathsf p(0,\Gamma)$. 
By Lemma \ref{170926-2} (ii), we have a finite-dimensional convergence 
\begin{equation}\label{290926-6}
\mathbb Z_T  \to^{d_f} \mathbb Z
\end{equation}
as $T\to\infty$. 

For $\delta>0$ and $c>0$, define $w_T(\delta,c)$ by 
\begin{equation} 
w_T(\delta,c) = \sup\bigg\{
\big|\log\mathbb Z_T(u_2)-\log\mathbb Z_T(u_1)\big|;\>u_1,u_2\in B_{c}, \> |u_2-u_1|\leq\delta\bigg\}
\end{equation}
for large $T$, where $B_c=\{u\in\mathbb R^\mathsf p;\>|u|\leq c\}$. 
Then by Lemma \ref{170926-1} (iii) and the definition of $\mathbb Z_T$, 
we have 
\begin{equation}\label{290926-7}
\lim_{\delta\downarrow 0}\limsup_{T\to\infty}P\big[w_T(\delta,c)>\epsilon\big]=0
\end{equation}
for every $\epsilon>0$ and $c>0$. 

According to e.g. Theorem 4 of \cite{Yoshida2011}, 
we obtain \eqref{170904-1} for $\hat{u}^M_T$ since 
Conditions $[C1]$ and $[C3]$ therein are ensured by (\ref{290926-7}) and 
by (\ref{290926-6}), respectively, and Condition $[C2]$ is trivial now. 

The properties (\ref{290926-7}) and (\ref{290926-6}) give 
functional convergence 
\begin{equation} 
\mathbb Z_T|_{B_c} \to^d \mathbb Z_T|_{B_c}\qquad \text{in}\quad C(B_c)
\end{equation}
as $T\to\infty$ for each $c>0$. 
Moreover, Lemma \ref{170926-2} already provided the PLD inequality for $\mathbb Z_T$. 
Thus  e.g. Theorem 10 of \cite{Yoshida2011} 
proves (\ref{170904-1}) for $\hat{u}^B_T$ 
once the estimate 
\begin{equation}\label{170926-8}
\sup_{T>1}E\bigg[\bigg(\int_{\mathbb U_T}\mathbb Z_T(u)du\bigg)^{-1}\bigg]
<\infty
\end{equation}
is established. However, (\ref{170926-8}) can be verified e.g. with Lemma 2 of \cite{Yoshida2011}. 
This completes the proof of Theorem \ref{170925-2}. 
\qed

\subsection{Proof of Proposition \ref{20180504-1}} 
Suppose that $\theta^*\not\in S_\mathbb{K}$. Then 
\begin{align}
\frac{1}{T}\big(C_T(S_\mathbb{K})-C_T(S_{\mathbb{K}^*})\big) & = -\frac{2}{T}\big(\mathbb{H}_T(\hat{\theta}_\mathbb{K})-\mathbb{H}_T(\theta^*)\big) +\frac{2}{T}\big(\mathbb H_T(\hat{\theta}_{\mathbb K^*})-\mathbb H_T(\theta^*)\big) \nonumber
\\ & \quad + \frac{a_T}{T}\big(\mathsf{d}(S_\mathbb{K})-\mathsf{d}(S_\mathbb{K}^*)\big) \nonumber
\\ & \geq 2\chi_0\inf_{\theta\in S_\mathbb{K}}\big|\theta-\theta^*\big|^2 +o_p(1)
\end{align}
by Lemma \ref{170926-1} (ii) and (\ref{290925-7}) applied to the sub-models $S_\mathbb K$ and $S_{\mathbb K^*}$ in place of $\Theta$. Then (i) holds since $\inf_{\theta\in S_\mathbb{K}}\big|\theta-\theta^*\big|>0$. 

Next, suppose that $\theta^*\in S_\mathbb{K}$ and $S_\mathbb{K}\not=S_{\mathbb{K}^*}$. Obviously, $\mathsf{d}(S_\mathbb{K})>\mathsf{d}(S_{\mathbb{K}^*})$. Denote by $\Gamma_\mathbb{K}$ the matrix consisting of the elements of $\Gamma$ with indices in $\mathbb{K}$. Then $[A3]$ implies $\det\Gamma_\mathbb{K}>0$. Thus, we can obtain the same results as in Theorem \ref{170925-2} for $\hat{\theta}_\mathbb{K}$. In particular, 
\begin{equation}
2\big(\mathbb{H}_T(\hat{\theta}_\mathbb{K})-\mathbb{H}_T(\theta^*)\big) = \Gamma_\mathbb{K}\big[(\hat{u}_\mathbb{K})^{\otimes2}\big]+o_p(1) =  O_p(1)
\end{equation}
for both QMLE and QBE for the model $S_\mathbb K$ since $\theta^*\in S_\mathbb K$, where $\hat{u}_\mathbb{K}=\sqrt{T}(\hat{\theta}_\mathbb{K}-\theta\*)$. This is also valid for the model $S_{\mathbb K^*}$ with  
$\hat{\theta}_{\mathbb K^*}$.
Therefore, 
\begin{equation}
C_T(S_\mathbb{K})-C_T(S_{\mathbb{K}^*}) = O_p(1)+\big(\mathsf{d}(S_\mathbb{K})-\mathsf{d}(S_{\mathbb{K}^*})\big)a_T \>\to^p\>\infty
\end{equation}
as $T\to\infty$, which completes the proof. 
\qed

\section{List of stocks}
\label{sec:ricList}
Table \ref{table:ricList} lists all the stocks investigated in the paper.

\begin{table}
\begin{center}
\footnotesize
\begin{tabular}{cccc}
\hline
\multirow{2}{*}{RIC} & \multirow{2}{*}{Company} & \multirow{2}{*}{Sector} & Number of trading
\\ & & & days in sample \\ \hline
AIRP.PA & Air Liquide & Healthcare / Energy & 239 \\
BNPP.PA & BNP Paribas & Banking & 224 \\ 
EDF.PA & Electricite de France & Energy & 237 \\ 
LAGA.PA & Lagardère & Media & 143 \\ 
CARR.PA & Carrefour & Retail & 230 \\ 
BOUY.PA & Bouygues & Construction / Telecom & 229 \\ 
ALSO.PA & Alstom & Transport & 230 \\
ACCP.PA & Accor & Hotels & 228 \\ 
ALUA.PA & Alcatel & Networks / Telecom & 235 \\ 
AXAF.PA & Axa & Insurance & 237 \\ 
CAGR.PA & Crédit Agricole & Banking & 236 \\ 
CAPP.PA & Cap Gemini & Technology Consulting & 233 \\ 
DANO.PA & Danone & Food & 230 \\ 
ESSI.PA & Essilor & Optics & 229 \\ 
LOIM.PA & Klepierre & Finance & 222 \\ 
LVMH.PA & Louis Vuitton Moët Hennessy & Luxury & 234 \\ 
MICP.PA & Michelin & Tires & 230 \\ 
OREP.PA & L'Oréal & Cosmetics & 234 \\ 
PERP.PA & Pernod Ricard & Spirits & 225 \\ 
PEUP.PA & Peugeot & Automotive & 152 \\ 
PRTP.PA & Kering & Luxury & 228 \\ 
PUBP.PA & Publicis & Communication & 224 \\ 
RENA.PA & Renault & Automotive & 229 \\ 
SAF.PA & Safran & Aerospace / Defense & 233 \\ 
TECF.PA & Technip & Energy & 226 \\ 
TOTF.PA & Total & Energy & 233 \\ 
VIE.PA & Veolia & Energy / Environment & 235 \\ 
VIV.PA & Vivendi & Media & 235 \\ 
VLLP.PA & Vallourec & Materials & 229 \\ 
VLOF.PA & Valeo & Automotive & 222 \\ 
SASY.PA & Sanofi & Healthcare & 230 \\ 
SCHN.PA & Schneider Electric & Energy & 225 \\ 
SGEF.PA & Vinci & Construction & 230 \\ 
SGOB.PA & Saint Gobain & Materials & 235 \\ 
SOGN.PA & Société Générale & Banking & 230 \\ 
STM.PA & ST Microelectronics & Semiconductor & 228 \\ \hline
\normalsize
\end{tabular}
\caption{List of stocks investigated in this paper. Sample consists of the whole year 2015, representing roughly 230 trading days for all stocks except LAGA.PA and PEUP.PA which are missing roughly 70 trading days.}
\label{table:ricList}
\end{center}
\end{table}

\newpage

\bibliographystyle{authordate1}
\bibliography{Cox}

\end{document}